\definecolor{darkred}{rgb}{0.5,0,0}
\definecolor{darkgreen}{rgb}{0,0.5,0}
\definecolor{darkblue}{rgb}{0,0,0.5}
\DeclareMathOperator{\conv}{conv}
\DeclareMathOperator{\diam}{diam}
\DeclareMathOperator{\ext}{ext}
\DeclareMathOperator{\vol}{vol}
\DeclareMathOperator{\argmax}{argmax}
\DeclareMathOperator{\argmin}{argmin}
\renewcommand{\leq}{\leqslant}
\renewcommand{\geq}{\geqslant}
\renewcommand{\arraystretch}{1.4}
\newtheorem{theorem}{Theorem}
\newtheorem{proposition}[theorem]{Proposition}
\newtheorem{definition}[theorem]{Definition}
\newtheorem{observation}[theorem]{Observation}
\newtheorem{corollary}[theorem]{Corollary}
\newtheorem{example}[theorem]{Example}
\newcommand{\ket}[1]{|#1\rangle}
\newcommand{\ketbra}[1]{|#1\rangle \langle#1|}
\newcommand{\bra}[1]{\mbox{$ \langle #1 | $}}
\newcommand{\tr}{\mathrm{tr}}
\newcommand{\x}{\mathbf{x}}
\begin{document}

\author{Ye-Chao Liu}
\email{liu@zib.de}
\affiliation{Zuse Institute Berlin, Takustraße 7, 14195 Berlin, Germany}

\author{Jannis Halbey}
\affiliation{Zuse Institute Berlin, Takustraße 7, 14195 Berlin, Germany}
\affiliation{Institut für Mathematik, Technische Universität Berlin, Straße des 17. Juni 136, 10623 Berlin, Germany}

\author{Sebastian Pokutta}
\affiliation{Zuse Institute Berlin, Takustraße 7, 14195 Berlin, Germany}
\affiliation{Institut für Mathematik, Technische Universität Berlin, Straße des 17. Juni 136, 10623 Berlin, Germany}

\author{Sébastien Designolle}
\email{sebastien.designolle@inria.fr}
\affiliation{Zuse Institute Berlin, Takustraße 7, 14195 Berlin, Germany}
\affiliation{Inria, ENS de Lyon, UCBL, LIP, 69342, Lyon Cedex 07, France}

\title{A Unified Toolbox for Multipartite Entanglement Certification}
\date{1st August 2025}

\begin{abstract}
We present a unified framework for multipartite entanglement characterization based on the conditional gradient (CG) method, incorporating both fast heuristic detection and rigorous witness construction with numerical error control.
Our method enables entanglement certification in quantum systems of up to ten qubits and applies to arbitrary entanglement structures.
We demonstrate its power by closing the gap between entanglement and separability bounds in white noise robustness benchmarks for a class of bound entangled states.
Furthermore, the framework extends to entanglement robustness under general quantum noise channels, providing accurate thresholds in cases beyond the reach of previous algorithmic methods.
These results position CG methods as a powerful tool for practical and scalable entanglement analysis in realistic experimental settings.
\end{abstract}

\maketitle

%%%%%%%%%%%%%%%%%%%%%%%%%%%%%%%%%%%%%%%%%%%%%%%%%%
%%%%%%%%%%%%%%%%%%%%%%%%%%%%%%%%%%%%%%%%%%%%%%%%%%
\textit{Introduction.---}%%%
Entanglement~\cite{HHHH09} is a fundamental feature of quantum mechanics that distinguishes it from classical physics, and serves as a central resource in quantum technologies such as communication, computation, metrology, and teleportation.
Accordingly, the detection and characterization of entanglement remain pivotal tasks in quantum information science~\cite{GT09}.
For bipartite systems, significant progress has been made through various entanglement criteria.
The Peres-Horodecki criterion, also known as the positive partial transpose (PPT) criterion, is a well-known example~\cite{Per96, Horo97}.
Another one is the computable cross norm or realignment (CCNR) criterion, which is more experimentally accessible and can detect certain bound entangled states that the PPT criterion cannot~\cite{CW03, Rud05}.

When it comes to multipartite systems, the challenges become significantly greater.
Not only does the Hilbert space grow exponentially, but the multipartite entanglement also has much richer structures, ranging from fully separable to genuine multipartite entanglement (GME).
While both PPT and CCNR criteria have been extended to detect GME~\cite{NMG13}, entanglement witnesses remain the most widely used tool for multipartite systems~\cite{HHH96, Ter00, LKCH00, BCH+02}.
A entanglement witness $\mathcal{W}$ is a necessary and sufficient criteria satisfying $\tr(\mathcal{W}\rho)\geq 0$ for all separable $\rho$, and $\tr(\mathcal{W}\rho)< 0$ for at least one entangled $\rho$.
They can be analytically constructed for arbitrary pure states~\cite{BEK+04} but often fall short when dealing with general (mixed) entangled states, including bound entangled states.

To overcome these limitations, algorithmic methods have been developed to address arbitrary entangled states.
The PPT criterion can be formulated as a semidefinite program (SDP), and its symmetric extensions, known as the Doherty–Parrilo–Spedalieri (DPS) hierarchy, can asymptotically approximate the separable space~\cite{DPS04, NOP09}.
On the other hand, by inner polytope approximation of the Hilbert space, there are also methods based on SDP that can certify separability~\cite{OYGN24}.
State-of-the-art SDP-based methods have been demonstrated for systems up to six qubits, yet they remain inadequate for current experimental regimes involving more than ten qubits~\cite{SXL+17}, where computational breakdowns often occur.

Beyond SDP-based methods, conditional gradient (CG) methods have emerged as a promising alternative for multipartite entanglement characterization.
The idea was first explored by Kampermann \textit{et al.}~\cite{KGWB12}, and later developed into suitable algorithms within CG methods~\cite{SG18, WPSW20, PSW20, Hu_2023_algorithm}.
A key advantage of CG methods is their flexibility in handling arbitrary multipartite entanglement structures.
However, existing CG-based techniques have primarily focused on separability certification and are constrained to small systems as well.

\begin{figure}[bt]
  \includegraphics[width=\columnwidth]{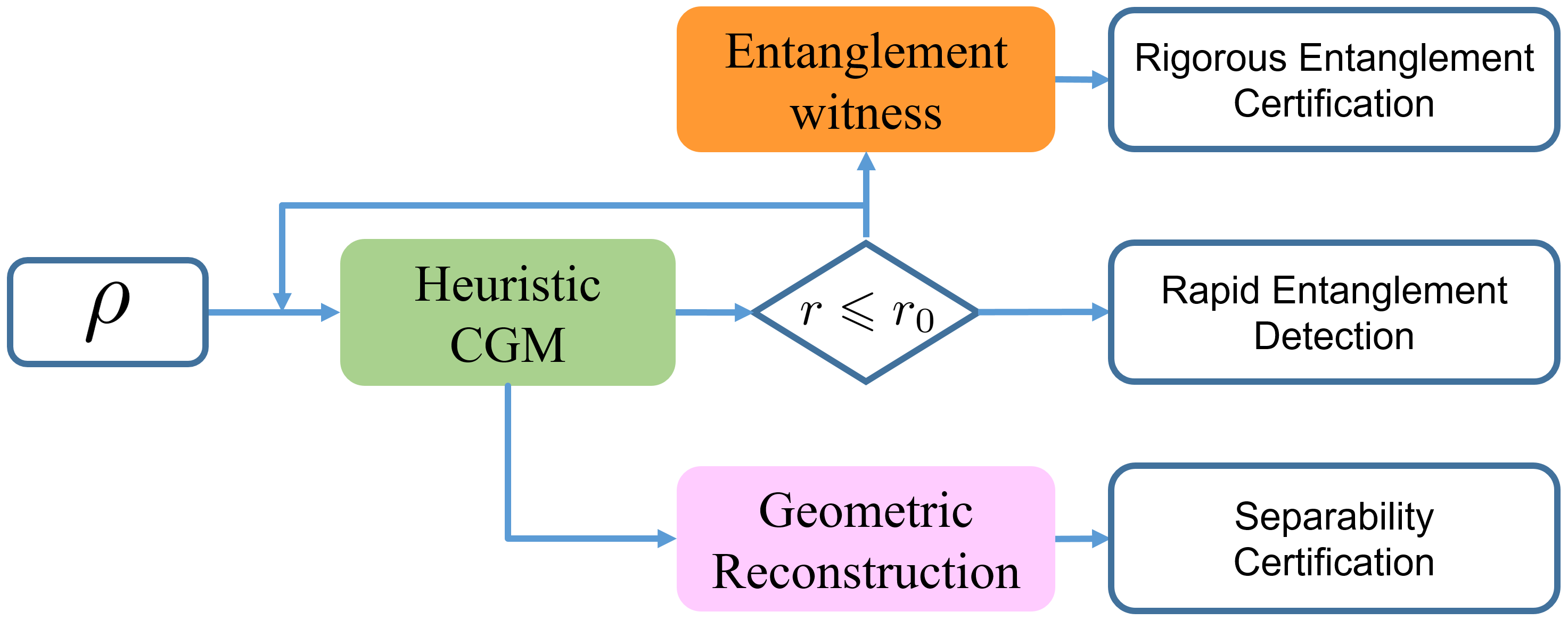}
    \caption{
        Schematic overview of our unified toolbox for multipartite entanglement certification based on conditional gradient (CG) methods.
        Starting from a target state $\rho$, the heuristic CG iteration evaluates the dual gap efficiency $r$ to determine the appropriate certification path.
        If $r \leq r_0$, entanglement is rapidly detected; otherwise, further CG iterations are performed or a rigorous entanglement witness is constructed to certify entanglement while accounting for numerical error.
        Alternatively, geometric reconstruction from the same CG-based output can be used to certify separability.
    }
    \label{fig:framework}
\end{figure}

In this work, we introduce a CG-based framework capable of certifying entanglement for multipartite quantum systems.
Our approach significantly extends the scalability of CG methods, enabling rapid entanglement detection in systems of up to ten qubits.
Moreover, it yields rigorous entanglement witnesses suitable for experimental validation.
Combined with separability certification via the same CG-based framework, our method constitutes a unified toolbox for multipartite entanglement analysis.
We demonstrate the power of our method by providing optimal entanglement thresholds on the entanglement robustness for a class of bound entangled states, thereby closing the gap between entanglement and separability thresholds.
Last but not least, our toolbox can assess robustness under nonlinear noise models and realistic noise channels, making it broadly applicable in experimental scenarios.
As illustrated in \cref{fig:framework}, our unified toolbox integrates fast detection, rigorous certification, and separability analysis into a coherent structure.

%%%%%%%%%%%%%%%%%%%%%%%%%%%%%%%%%%%%%%%%%%%%%%%%%%
%%%%%%%%%%%%%%%%%%%%%%%%%%%%%%%%%%%%%%%%%%%%%%%%%%
\textit{Entanglement and Frank-Wolfe algorithm.---}%%%
We begin by recalling the definition of multipartite entanglement.
A pure state is called $k$-separable if it can be written as $\ket{\psi^{k\rm s}} = \ket{\phi_1} \otimes \ket{\phi_2} \otimes \cdots \otimes \ket{\phi_t}$, where each $\ket{\phi_i}$ may itself be entangled within a subset of the total system.
For a mixed state, it is $k$-separable if it can be expressed as a convex combination of $k$-separable pure states: $\sigma^{k\rm s} = \sum_i \lambda_i \ket{\psi^{k\rm s}_i}\bra{\psi^{k\rm s}_i}$, where the coefficients $\lambda_i$ form a probability distribution.
Usually, if $n$-partite quantum states are not $n$-separable, then they are called entangled states.
And non-$2$-separable states are called GME.

Accordingly, the set of all $k$-separable states forms a convex subset $\mathcal{S}_{k}$ of the state space, and any state outside $\mathcal{S}_{k}$ is entangled.
Characterizing multipartite entanglement then amounts to determining whether a given quantum state $\rho$ belongs to $\mathcal{S}_{k}$.
This naturally leads to the following optimization problem:
\begin{equation}\label{eq:opt}
    f(\rho) = \min_{\sigma\in\mathcal{S}_k} f(\rho, \sigma)\,,
\end{equation}
where $f(\rho, \sigma)$ denotes a distance measure between quantum states.
If $f(\rho) = 0$, the state is $k$-separable; otherwise, it is entangled.
A common choice for the distance measure is the distance induced by the Hilbert-Schmidt norm $\|\rho - \sigma\|$, also known as the Frobenius norm or $l_2$
norm.
For algorithmic convenience, we consider half of its squared form $f(\rho, \sigma) = \frac{1}{2} \|\rho - \sigma\|^2$, which is convex and differentiable.
Although not a strict distance, we will refer to $f(\rho)$ as a ``distance'' throughout, reflecting its geometric role in the optimization.

This is a constrained convex optimization problem, and the typical algorithmic tool is the Frank–Wolfe method, also known as the conditional gradient method.
The basic procedure for entanglement characterization via the vanilla Frank-Wolfe algorithm is as follows:
\begin{algorithm}[H]
  \renewcommand\thealgorithm{}
  \floatname{algorithm}{}
  \caption{Vanilla Frank–Wolfe algorithm}
  \label{alg:FW}
  \begin{algorithmic}[1]
    \Require Distance function $f(x)$; initial point $\sigma_0 \in \mathcal{S}_k$
    \For{$t = 0, \ldots, m$}
      \State $\psi_t \gets \arg\min_{\psi \in \mathcal{S}_k} \bra{\psi} \sigma_t - \rho \ket{\psi}$
      \State $\gamma_t \gets \arg\min_\gamma f(\sigma_t + \gamma(\psi_t - \sigma_t))$
      \State $\sigma_{t+1} \gets \sigma_t + \gamma_t(\psi_t - \sigma_t)$
    \EndFor
    \Ensure $\sigma_1, \ldots, \sigma_m \in \mathcal{S}_k$
  \end{algorithmic}
\end{algorithm}

Each iteration constructs a new state $\sigma_{t+1}$ such that $f(\rho, \sigma_{t+1}) \leq f(\rho, \sigma_t)$.
The product state $\psi_t$, obtained via a linear minimization oracle (LMO), corresponds to the extreme point in $\mathcal{S}_k$ that is most aligned with the descent direction $\sigma_t - \rho$, i.e., it minimizes the directional derivative of the distance function along this direction.
This makes the update efficient and steers $\sigma_t$ progressively toward $\rho$.
An essential advantage of Frank–Wolfe algorithms over other gradient-based approaches is to avoid costly projections onto $\mathcal{S}_k$, which are as difficult as the problem~\eqref{eq:opt} we are trying to solve, which is itself known to be NP-hard~\cite{Gur03}.

A related idea was previously introduced in the work of Kampermann \textit{et al.}~\cite{KGWB12}, and later studies have adopted variants based on Gilbert’s algorithm~\cite{SG18, WPSW20, PSW20, Hu_2023_algorithm}.
Although Gilbert's algorithm~\cite{Gil66} was used earlier in the quantum information literature, it is a special case of the more general Frank-Wolfe method~\cite{FW56, LP66}, which was introduced earlier and has since seen significant advances in algorithmic theory.
In this work, we adopt the term \textit{conditional gradient} (CG) method to unify terminology, encompassing both the vanilla Frank–Wolfe (or Gilbert's) algorithm and its modern improvements \cite{wolfe1976normpoint, guelat1986some, holloway1974extension, tsuji2022pairwise, pok17lazy, halbey2025efficientquadraticcorrectionsfrankwolfe}.
Further details and algorithmic enhancements used in this work can be found in Appendix A \cite{supp} and in recent review articles~\cite{BRZ21, BCC+22}.
Our code is available as a Julia package~\footnote{\label{code}See our code as a Julia package at \url{https://github.com/ZIB-IOL/EntanglementDetection.jl}.}.

%%%%%%%%%%%%%%%%%%%%%%%%%%%%%%%%%%%%%%%%%%%%%%%%%%
%%%%%%%%%%%%%%%%%%%%%%%%%%%%%%%%%%%%%%%%%%%%%%%%%%
\textit{Rapid entanglement detection.---}%%%
By avoiding explicit projections onto the separable space, the main computational cost of our method lies in evaluating the linear minimization oracle (LMO), which can be written as
\begin{equation}\label{eq:heuristic}
    \min_{\ket{\phi_i} \in \mathcal{H}_i, i=1,\cdots,k}
    \bra{\phi_1} \otimes \cdots \otimes \bra{\phi_k}
    (\sigma_t - \rho)
    \ket{\phi_1} \otimes \cdots \otimes \ket{\phi_k}\,.
\end{equation}
This optimization can be heuristically solved via alternating updates over subsystems.
However, due to numerical error and the inherent limitations of finite iterations, the computed quantity $f(\rho, \sigma_t)$ generally does not converge to zero, even for separable states.
Thus, $f(\rho, \sigma_t)$ cannot be directly used to certify entanglement or separability.
In previous works~\cite{SG18}, a geometric reconstruction approach was developed within the CG-based framework for separability certification.

Here, we take the complementary perspective to utilize CG methods for certifying entanglement.
Our approach is based on the following proposition:
\begin{proposition}\label{prop:ent}
A quantum state $\rho$ is entangled if and only if there exists a separable state $\sigma$ such that $\|\sigma - \tau\| < \|\rho - \tau\|$ for all $\tau \in \mathcal{S}_k$.
\end{proposition}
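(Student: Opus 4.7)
The plan is to treat the two implications separately. For the ``if'' direction I would argue by contradiction: assuming such a separable $\sigma$ exists while $\rho \in \mathcal{S}_k$, one can take $\tau = \rho$ in the hypothesis to obtain $\|\sigma - \rho\| < \|\rho - \rho\| = 0$, which is absurd; hence $\rho \notin \mathcal{S}_k$ and $\rho$ is entangled.

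For the ``only if'' direction, the natural candidate is the closest separable state $\sigma^\star = \argmin_{\sigma \in \mathcal{S}_k} \|\rho - \sigma\|$, whose existence and uniqueness follow from $\mathcal{S}_k$ being nonempty, convex, and closed, together with the strict convexity of the Hilbert--Schmidt norm. The crucial ingredient is then the standard first-order optimality (variational) inequality for projections onto a convex set, namely $\langle \rho - \sigma^\star, \tau - \sigma^\star \rangle \leq 0$ for every $\tau \in \mathcal{S}_k$. Expanding
\begin{equation*}
\|\rho - \tau\|^2 = \|\rho - \sigma^\star\|^2 + 2 \langle \rho - \sigma^\star, \sigma^\star - \tau \rangle + \|\sigma^\star - \tau\|^2
\end{equation*}
and using this inequality to drop the cross term yields $\|\rho - \tau\|^2 \geq \|\rho - \sigma^\star\|^2 + \|\sigma^\star - \tau\|^2$. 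Since $\rho$ is entangled, $\rho \notin \mathcal{S}_k$ and therefore $\|\rho - \sigma^\star\| > 0$, making the bound strict: $\|\sigma^\star - \tau\| < \|\rho - \tau\|$ for every $\tau \in \mathcal{S}_k$.

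The argument is essentially the observation that metric projection onto a closed convex set in Hilbert--Schmidt geometry strictly shrinks the distance to any external point. I do not anticipate any real obstacle here; the only subtle point is ensuring the final inequality is strict, which is guaranteed precisely by the hypothesis that $\rho$ is entangled. The substantive content of the proposition is then that the separable witness $\sigma$ can always be chosen to be the metric projection of $\rho$ onto $\mathcal{S}_k$, connecting naturally to the output of the CG iterates described above.
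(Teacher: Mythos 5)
Your proof is correct and follows essentially the same route as the paper: take the metric projection $\sigma^\star$ of $\rho$ onto $\mathcal{S}_k$, invoke the first-order optimality condition $\langle \rho - \sigma^\star, \tau - \sigma^\star\rangle \leq 0$, and expand the squared Hilbert--Schmidt norm, with the converse handled by setting $\tau = \rho$. If anything, your write-up is the more careful one, since you land directly on the claimed inequality $\|\sigma^\star - \tau\| < \|\rho - \tau\|$ and pinpoint that strictness comes from $\|\rho - \sigma^\star\| > 0$, whereas the paper's necessity step concludes the slightly different inequality $\|\rho - \tau\| > \|\rho - \sigma\|$ and asserts a strict variational inequality that in general only holds non-strictly.
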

\begin{proof}
Necessity.---
Let $\sigma = \arg\min_{\sigma \in \mathcal{S}_k} \|\rho - \sigma\|$.
For any $\tau \in \mathcal{S}_k$, we have $\tr[(\sigma - \rho)(\sigma - \tau)] < 0$, and therefore we achieve $\|\rho - \tau\|^2
> \|\rho - \sigma\|^2$ following the parallelogram identity.
Sufficiency.---
If $\rho$ is separable, let $\tau = \rho \in \mathcal{S}_k$.
Then the assumption $\|\sigma - \rho\| < \|\rho - \rho\| = 0$ leads to a contradiction.
\end{proof}

We now introduce the dual gap $g_t = \tr[(\sigma_t - \rho)(\sigma_t - \psi_t)]$, a standard convergence indicator in CG methods~\cite{BCC+22}.
With the parallelogram identity, we can rewrite it as
\begin{equation}
g_t = \tfrac{1}{2}\|\rho - \sigma_t\|^2 + \tfrac{1}{2}\|\sigma_t - \psi_t\|^2 - \tfrac{1}{2}\|\rho - \psi_t\|^2\,.
\end{equation}
This motivates the following entanglement criterion:
\begin{corollary}\label{coro:criterion}
A quantum state $\rho$ is entangled if and only if the inequality $f(\rho, \sigma_t) > g_t$ holds.
\end{corollary}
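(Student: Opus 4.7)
The plan is to reduce the corollary to Proposition~\ref{prop:ent} by interpreting the dual gap $g_t$ as a maximum over $\mathcal{S}_k$ and then translating everything back into distances through the parallelogram identity.

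First, I would observe that the rewriting of $g_t$ given just before the corollary is not specific to $\psi_t$: for any $\tau$ one has the identity
\begin{equation}
\tr[(\sigma_t-\rho)(\sigma_t-\tau)] = f(\rho,\sigma_t) + \tfrac{1}{2}\|\sigma_t-\tau\|^2 - \tfrac{1}{2}\|\rho-\tau\|^2.
\end{equation}
Since the LMO objective $\tr[(\sigma_t-\rho)\psi]$ is linear in $\psi$, its minimum over the set of pure product states coincides with its minimum over the whole convex hull $\mathcal{S}_k$; hence $g_t = \max_{\tau\in\mathcal{S}_k}\tr[(\sigma_t-\rho)(\sigma_t-\tau)]$. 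Combining these two facts, the inequality $f(\rho,\sigma_t)>g_t$ is equivalent to $\|\sigma_t-\tau\|<\|\rho-\tau\|$ for every $\tau\in\mathcal{S}_k$.

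Second, this equivalence is exactly the condition appearing in Proposition~\ref{prop:ent} with the witness $\sigma:=\sigma_t$, which is legitimate because the Frank-Wolfe iteration keeps $\sigma_t$ inside $\mathcal{S}_k$. So for the direction ``inequality $\Rightarrow$ entanglement'' I would simply invoke the sufficiency part of Proposition~\ref{prop:ent}; equivalently, if $\rho$ were separable, taking $\tau=\rho\in\mathcal{S}_k$ in the reformulated inequality would force $\|\sigma_t-\rho\|<0$, a contradiction.

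For the converse direction I would make the implicit quantifier on $t$ explicit: one must exhibit some $\sigma_t\in\mathcal{S}_k$ for which the criterion fires. The natural candidate is the metric projection $\sigma^\star=\arg\min_{\sigma\in\mathcal{S}_k}\|\rho-\sigma\|$ already featured in the proof of Proposition~\ref{prop:ent}. For this choice the first-order optimality condition gives $\tr[(\sigma^\star-\rho)(\sigma^\star-\tau)]\leq 0$ for all $\tau\in\mathcal{S}_k$, with equality at $\tau=\sigma^\star$, so the associated dual gap vanishes. Meanwhile $f(\rho,\sigma^\star)>0$ because $\rho\notin\mathcal{S}_k$ when $\rho$ is entangled, yielding $f(\rho,\sigma^\star)>g^\star$ as required.

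The main obstacle is conceptual rather than technical: one has to be careful that the ``iff'' in the statement should be read as ``the inequality is a valid entanglement certificate that, moreover, is tight enough to detect every entangled state at the optimal iterate'', since for a generic $\sigma_t$ produced early in the algorithm the inequality may simply fail to hold yet. Once this quantifier issue is clarified, the argument is a short algebraic consequence of the parallelogram identity, LMO linearity, and Proposition~\ref{prop:ent}.
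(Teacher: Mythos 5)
Your proof is correct and follows essentially the route the paper intends: the corollary is meant to be read off from Proposition~\ref{prop:ent} together with the parallelogram-identity rewriting of $g_t$, using linearity of the LMO objective to identify $g_t$ with the maximum of $\tr[(\sigma_t-\rho)(\sigma_t-\tau)]$ over all of $\mathcal{S}_k$. The paper gives no explicit proof, and your extra care with the quantifier over $t$ (the ``only if'' direction holding at the optimal point $\sigma^\star$, where the dual gap vanishes, rather than at an arbitrary iterate) correctly supplies what the statement leaves implicit.
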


Therefore, instead of simply checking whether $f(\rho, \sigma_t) > 0$, we establish a practical entanglement detection criterion by defining the dual gap efficiency
\begin{equation}
r_t := \frac{g_t}{f(\rho, \sigma_t)}\,,
\end{equation}
which compares the dual gap $g_t$ with the current distance $f(\rho, \sigma_t)$.
The algorithm terminates and declares $\rho$ entangled once $r_t < 1$.
As shown in \cref{fig:shortcut}(a), the three-qubit GHZ state with $80\%$ white noise lies exactly on the boundary of $\mathcal{S}_3$, resulting in a nearly constant $r_t$ throughout iterations.
When the noise is reduced by $1\%$, introducing slight entanglement, the criterion is satisfied after a few thousand iterations.
States with higher entanglement (i.e., less noise) are certified more rapidly.

This capability enables entanglement detection in large systems, particularly in practical scenarios.
In experiments, entangled states often contain moderate noise, yet larger systems typically tolerate more.
To illustrate this, \cref{fig:shortcut}(b) shows that $10$-qubit GHZ and Dicke states mixed with $70\%$ white noise can still be certified within dozens of iterations.
The total runtime is under one minute using a 2.80~GHz CPU and 16~GB of memory.
In the main text, we focus on detecting full separability or bipartite entanglement (i.e., $\mathcal{S}_k$ with $k=n$ for $n$-qubit systems); for general entanglement structures, see Appendix B \cite{supp}, where we improve upon existing results \cite{Guhne_2010_separability, Gao_2010_detection, Jungnitsch_2011_taming, Ananth_2016_nonkseparability, Chen_2018_necessary, SG18, Ge_2021_tripartite} in most cases.

%%%%%%%%%%%%%%%%%%%%%%%%%%%%%%%%%%%%%%%%%%%%%%%%%%%%%%%%%%%%%%%%
\begin{figure}[t]
  \includegraphics[width=\columnwidth]{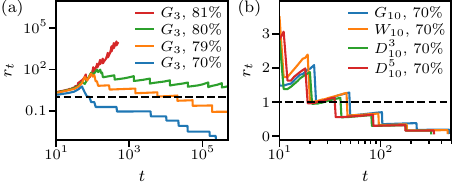}
    \caption{
      The ratio $r_t=g_t/f(\rho,\sigma_t)$ for (a) three-qubit GHZ state mixed with varying levels of white noise, and (b) ten-qubit GHZ and Dicke states with $70\%$ white noise.
      The dashed line denotes the threshold $r=1$ for entanglement detection.
    }
    \label{fig:shortcut}
\end{figure}
%%%%%%%%%%%%%%%%%%%%%%%%%%%%%%%%%%%%%%%%%%%%%%%%%%%%%%%%%%%%%%%%

Our results reveal a clear connection between the amount of entanglement and the computational effort required.
For a fixed system size, more entangled states require fewer iterations to satisfy the criterion.
For instance, the three-qubit GHZ state with less noise converges faster; the Dicke state outperforms the GHZ state in the $10$-qubit case, as it exhibits stronger robustness to white noise.

However, this property also implies that detecting weak entanglement near the boundary of $\mathcal{S}_k$ may require prohibitively many iterations.
Moreover, since the LMO problem~\eqref{eq:heuristic} is solved heuristically, there is no guarantee of global optimality.
This compromises the reliability of \cref{coro:criterion}, which critically depends on the accuracy of $\psi_t$ and $g_t$.
In practice, this can be mitigated by tightening the threshold, e.g., accepting $r_t < 1/5$, which suffices in most cases, although without theoretical guarantee.

In order to strictly avoid the above two possible problems, we introduce in the following an entanglement witness, which explicitly accounts for numerical uncertainty and therefore can certify entanglement rigorously.
It can be seen as a subroutine within the same algorithmic framework.

%%%%%%%%%%%%%%%%%%%%%%%%%%%%%%%%%%%%%%%%%%%%%%%%%%
%%%%%%%%%%%%%%%%%%%%%%%%%%%%%%%%%%%%%%%%%%%%%%%%%%
\textit{Rigorous entanglement witness.---}%%%
An entanglement witness for a generic quantum state can be constructed as~\cite{PR01, BNT02, BDHK05}
\begin{equation}
    \mathcal{W} = \sigma - \rho_e + \operatorname{Tr}[\sigma(\rho_e - \sigma)] \cdot \openone\,,
\end{equation}
where $\rho_e$ is the target entangled state to be certified, and $\sigma$ is the closest separable state in $\mathcal{S}_k$ (normalization omitted).
The main difficulty lies in finding $\sigma$, which corresponds exactly to the solution of \cref{eq:opt}.

Instead of the heuristic LMO procedure, we now consider a structured search over an $\varepsilon$-net $\mathcal{S}_k^\varepsilon$ of the boundary of the separable set $\mathcal{S}_k$ \cite{Pis89, AS17, EG00}.
It is a finite set of pure separable states satisfying the condition that for every $\sigma$ in the boundary of $\mathcal{S}_k$, there exists $\sigma^\varepsilon \in \mathcal{S}_k^\varepsilon$ such that $\|\sigma - \sigma^\varepsilon\| \leq \varepsilon$.
A general construction of $\mathcal{S}_k^\varepsilon$ is given in Appendix C \cite{supp}.

With this discretization, the original optimization problem can be replaced by a finite search, allowing for rigorous error control.
In particular, the deviation from the true optimum is bounded as
\begin{equation}
    0 \leq \|\rho - \sigma^\varepsilon\| - \|\rho - \sigma^*\| \leq \varepsilon\,,
\end{equation}
where $\sigma^*$ denotes the optimal solution to \cref{eq:opt}, and $\sigma^\varepsilon$ is the solution obtained from $\mathcal{S}_k^\varepsilon$.

This enables us to construct a rigorous entanglement witness as follows:
\begin{proposition}\label{prop:witness_robust}
    Given an entangled state $\rho$ and a separable state $\sigma$, let the product state $\phi$ in ${\mathcal{S}_k^\varepsilon}$ be the closest to the direction $\Lambda =\sigma - \rho$.
    Then the operator
    \begin{eqnarray}\label{eq:witness_robust}
        \mathcal{W} = \frac{\Lambda - (\beta-\epsilon)\cdot\openone}{||\Lambda||}
    \end{eqnarray}
    defines an entanglement witness for the quantum state $\rho$, where $\beta = \tr(\Lambda\ket{\phi}\bra{\phi})$ and $\epsilon = (1-\eta)||\Lambda||$.
    Here, the factor $\eta$ depends on $\varepsilon$ and the construction of $\mathcal{S}_k^\varepsilon$.
\end{proposition}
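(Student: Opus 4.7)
The plan is to verify the two defining properties of an entanglement witness for $\rho$: (a) $\tr(\mathcal{W}\tau) \geq 0$ for every $\tau \in \mathcal{S}_k$, and (b) $\tr(\mathcal{W}\rho) < 0$. For (a), by convexity of $\mathcal{S}_k$ and linearity of the trace, it suffices to restrict to pure product states; after clearing the positive normalisation $\|\Lambda\|$ the claim becomes $\bra{\chi}\Lambda\ket{\chi} \geq \beta - \epsilon$ for every product $\ket{\chi}$. I would let $\ket{\chi^\star}$ attain the minimum of the left-hand side and let $\ket{\phi^\star} \in \mathcal{S}_k^\varepsilon$ be the net point closest to $\ket{\chi^\star}$. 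The $\varepsilon$-net construction in Appendix C supplies a quantitative bound of the form $\|\ketbra{\phi^\star} - \ketbra{\chi^\star}\| \leq \delta(\varepsilon)$, and Cauchy--Schwarz in the Hilbert--Schmidt inner product then yields $|\bra{\phi^\star}\Lambda\ket{\phi^\star} - \bra{\chi^\star}\Lambda\ket{\chi^\star}| \leq \|\Lambda\|\delta(\varepsilon)$. Combined with the minimality of $\phi$ over the net, $\beta = \bra{\phi}\Lambda\ket{\phi} \leq \bra{\phi^\star}\Lambda\ket{\phi^\star}$, this gives $\bra{\chi^\star}\Lambda\ket{\chi^\star} \geq \beta - \|\Lambda\|\delta(\varepsilon) = \beta - \epsilon$ upon identifying $\eta = 1 - \delta(\varepsilon)$; property (a) then follows by convexity.

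For (b), I would expand $\tr(\Lambda\rho) = \tr[(\sigma - \rho)\rho] = \tr(\Lambda\sigma) - \|\Lambda\|^2$. Since $\sigma \in \mathcal{S}_k$, (a) already yields $\tr(\Lambda\sigma) \geq \beta - \epsilon$; when $\sigma$ is (approximately) the closest separable state produced by the CG algorithm, the first-order optimality $\tr[(\sigma - \rho)(\tau - \sigma)] \geq 0$ for all $\tau \in \mathcal{S}_k$ forces this inequality to be nearly tight, so that $\tr(\Lambda\sigma) - (\beta - \epsilon)$ is of order $\epsilon$. Substituting back gives $\|\Lambda\| \cdot \tr(\mathcal{W}\rho) \leq O(\epsilon) - \|\Lambda\|^2$, which is strictly negative as soon as $\|\sigma - \rho\|^2 > \epsilon$, i.e.\ whenever the net resolution drives $\eta$ close enough to one compared to the distance of $\rho$ to the separable set.

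The main obstacle is the quantitative content concealed in $\delta(\varepsilon)$: one has to bound the Hilbert--Schmidt distance between the rank-one projectors of two close pure product states in terms of the resolution of the net, while controlling how that resolution scales with the number of parties and their local dimensions. This is exactly what the construction of $\mathcal{S}_k^\varepsilon$ in Appendix C is designed to provide; once an explicit $\delta(\varepsilon)$, and hence $\eta$, is in hand, the rest of the argument reduces to routine linear algebra.
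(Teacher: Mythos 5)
Your verification of the separability condition $\tr(\mathcal{W}\tau)\geq 0$ is correct in substance but takes a genuinely different route from the paper. You exploit the \emph{metric} property of the net: reduce to pure product states by convexity, take the minimiser $\ket{\chi^\star}$, pass to its nearest net point, and control the difference by Cauchy--Schwarz against the covering radius $\delta(\varepsilon)$, arriving at $\epsilon=\|\Lambda\|\,\delta(\varepsilon)$. The paper instead uses the \emph{homothetic} property of the net, namely the shrinking factor $\eta$ of \cref{def:shrink} satisfying $\eta\,\mathcal{S}_k\subset\conv(\mathcal{S}_k^\varepsilon)\subset\mathcal{S}_k$: any $\tau\in\mathcal{S}_k$ is split as $\tr(\Lambda\tau)=\tr(\Lambda\,\eta\tau)+(1-\eta)\tr(\Lambda\tau)$, the first term is compared with the extremal value $\beta$ over the net because $\eta\tau\in\conv(\mathcal{S}_k^\varepsilon)$, and the second is bounded by Cauchy--Schwarz as $|(1-\eta)\tr(\Lambda\tau)|\leq(1-\eta)\|\Lambda\|=\epsilon$. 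Both arguments are sound and yield a correction of the same form, but note that the $\eta$ in the statement is the shrinking factor, whereas your $\eta=1-\delta(\varepsilon)$ is a different (though related) quantity --- Appendix C shows an $\varepsilon$-net has shrinking factor $\eta\geq\sqrt{1-\varepsilon^2/2}$. Since the proposition only says $\eta$ ``depends on $\varepsilon$ and the construction,'' your calibration is admissible, but it is not the one the explicit net construction of Appendix C certifies directly.

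On the detection condition $\tr(\mathcal{W}\rho)<0$: the paper's proof does not address it at all; it only establishes nonnegativity on $\mathcal{S}_k$, and indeed the operator in \cref{eq:witness_robust} need not detect $\rho$ unless $\sigma$ is sufficiently close to $\rho$ and the net sufficiently fine. Your attempt is therefore more ambitious than the source, but as written it has a gap: the proposition assumes only that $\sigma$ is \emph{some} separable state, so the first-order optimality condition $\tr[(\sigma-\rho)(\tau-\sigma)]\geq0$ you invoke is not available, and the claim that $\tr(\Lambda\sigma)-(\beta-\epsilon)$ is ``of order $\epsilon$'' is not controlled without further assumptions relating $\beta$ to $\tr(\Lambda\sigma)$. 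Your own final condition --- negativity holds once $\|\sigma-\rho\|^2$ dominates the $\epsilon$-sized error --- correctly identifies the extra hypothesis that is needed; it should be stated as an assumption rather than derived.
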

\begin{proof}
    See the proof in Appendix D \cite{supp}.
\end{proof}

While the search over the $\varepsilon$-net space can be computationally expensive, it is only required as a subroutine within the CG-based framework and not in every iteration.
Its effectiveness depends on the quality of the reference separable state $\sigma$, which serves as an anchor for constructing the witness.

In practice, this structured search complements the fast heuristic method.
The heuristic CG procedure is used initially to rapidly detect entanglement in most practical scenarios.
If the ratio $r_t$ fails to converge to the detection threshold, or when experimental certification or precise boundary identification is needed, the rigorous entanglement witness can be invoked to ensure validity.

%%%%%%%%%%%%%%%%%%%%%%%%%%%%%%%%%%%%%%%%%%%%%%%%%%
%%%%%%%%%%%%%%%%%%%%%%%%%%%%%%%%%%%%%%%%%%%%%%%%%%
\textit{Applications for closing gap.---}%%%
To demonstrate the power of our method, particularly its capacity for rigorous entanglement certification, we consider the problem of estimating the white noise robustness of entangled states.
This task serves as a standard and widely used benchmark for evaluating entanglement and separability detection tools.
It is to determine the maximal white noise level $p$ such that the mixed state
$\rho(p) = (1 - p)\rho + p\,\openone/d$
remains separable; that is,
$p_{\text{sep}} = \max\{p \geq 0 : \rho(p) \in \mathcal{S}_k\}$.

As previously mentioned, the CG-based framework can also certify separability using a geometric reconstruction procedure~\cite{SG18, GB03}.
In our framework, it can be directly achieved based on the same algorithmic results of \cref{eq:opt} as the following \cref{prop:sep_ball_Id}.
This allows us to construct a unified toolbox for closing the gap between entanglement and separability bounds in the white noise robustness problem.

\begin{proposition}\label{prop:sep_ball_Id}
Given a noisy (separable or entangled) quantum state $\rho(p) = (1 - p)\rho + p\,\openone/d$, and suppose $\|\rho(p) - \sigma\| = \delta$ for some $\sigma \in \mathcal{S}_k$.
Then the state $\rho(\frac{p + \epsilon}{1 + \epsilon})$ must be separable, where $\epsilon \geq \delta/a$, and $a$ is the radius of the separable ball.
\end{proposition}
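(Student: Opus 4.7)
The plan is to exhibit $\rho\bigl(\tfrac{p+\epsilon}{1+\epsilon}\bigr)$ explicitly as a convex combination of the given $\sigma\in\mathcal{S}_k$ and a small perturbation of the maximally mixed state that still lies inside the separable ball of radius $a$ around $\openone/d$. Since $\mathcal{S}_k$ is convex, this immediately yields separability.

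First I would rewrite the noisy state with the shifted noise level as
\begin{equation*}
\rho\!\left(\tfrac{p+\epsilon}{1+\epsilon}\right)
=\tfrac{1}{1+\epsilon}\,\rho(p)+\tfrac{\epsilon}{1+\epsilon}\,\openone/d,
\end{equation*}
which is a direct algebraic manipulation of the definition $\rho(q)=(1-q)\rho+q\openone/d$. Then I would introduce $\Delta := \rho(p)-\sigma$ and substitute $\rho(p)=\sigma+\Delta$ to regroup terms as
\begin{equation*}
\rho\!\left(\tfrac{p+\epsilon}{1+\epsilon}\right)
=\tfrac{1}{1+\epsilon}\,\sigma+\tfrac{\epsilon}{1+\epsilon}\left(\openone/d+\tfrac{1}{\epsilon}\Delta\right).
\end{equation*}
This is a convex combination provided the bracketed operator $\tau:=\openone/d+\tfrac{1}{\epsilon}\Delta$ belongs to $\mathcal{S}_k$.

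The second step is to certify $\tau\in\mathcal{S}_k$. Since $\rho(p)$ and $\sigma$ are both Hermitian with unit trace, $\Delta$ is Hermitian and traceless, so $\tau$ is automatically Hermitian with unit trace. Its Hilbert--Schmidt distance to the maximally mixed state is $\|\tau-\openone/d\|=\|\Delta\|/\epsilon=\delta/\epsilon\leq a$ by the hypothesis $\epsilon\geq\delta/a$. By definition of $a$ as the radius of the separable ball around $\openone/d$, every Hermitian unit-trace operator within Hilbert--Schmidt distance $a$ of $\openone/d$ is $k$-separable, hence $\tau\in\mathcal{S}_k$. Combining this with $\sigma\in\mathcal{S}_k$ and convexity of $\mathcal{S}_k$ finishes the proof.

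The only nontrivial input is the separable-ball fact invoked in the second step, i.e.\ the Gurvits--Barnum-type result (and its multipartite extensions for general $\mathcal{S}_k$) that underlies the definition of $a$; once this is granted, everything else is a one-line convex rearrangement. The main obstacle when applying the proposition in practice is therefore not the proof itself but obtaining a sharp value of $a$ for the chosen separability structure, which is an independent geometric question about $\mathcal{S}_k$.
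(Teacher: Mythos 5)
Your proof is correct and is essentially the same argument as the paper's: the explicit decomposition $\rho\bigl(\tfrac{p+\epsilon}{1+\epsilon}\bigr)=\tfrac{1}{1+\epsilon}\sigma+\tfrac{\epsilon}{1+\epsilon}\bigl(\openone/d+\tfrac{1}{\epsilon}\Delta\bigr)$ is precisely the algebraic form of the paper's ``geometric reconstruction,'' where the extrapolated point $\rho_x$ parallel to $\rho(p)-\sigma$ is your operator $\tau$, and the condition $\tfrac{1}{\epsilon}\|\rho(p)-\sigma\|\leq a$ places it inside the Gurvits--Barnum separable ball. Your version is arguably cleaner, since it makes the convexity step and the role of the ball radius fully explicit rather than pictorial.
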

\begin{proof}
    See the proof in Appendix E \cite{supp}.
\end{proof}

%%%%%%%%%%%%%%%%%%%%%%%%%%%%%%%%%%%%%%%%%%%%%%%%%%%%%%%%%%%%%%%%
\begin{figure}[t]
  \includegraphics[width=\columnwidth]{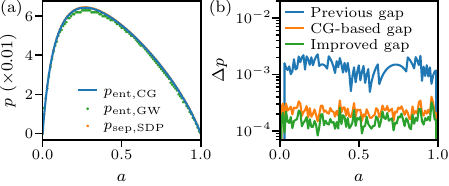}
  \caption{(a) Entanglement bounds $p_{\rm ent, CG}$ (blue line) on the white noise robustness of Horodecki states $\rho^H(a)$, obtained via our CG method, compared with previous entanglement bounds based on a generalization of the Wootters formula $p_{\rm ent, GW}$ (blue dots) and separability bounds based on SDP methods $p_{\rm sep, SDP}$ (orange dots).
  (b) Comparison of the gap $\Delta p$ between entanglement and separability bounds: blue shows the range $[p_{\rm ent, GW}, p_{\rm sep, SDP}]$, green shows $[p_{\rm ent, CG}, p_{\rm sep, SDP}]$, and orange shows $[p_{\rm ent, CG}, p_{\rm sep, CG}]$.}
  \label{fig:horodecki}
\end{figure}
%%%%%%%%%%%%%%%%%%%%%%%%%%%%%%%%%%%%%%%%%%%%%%%%%%%%%%%%%%%%%%%%

As a concrete example, we apply our method to the Horodecki states $\rho^H(a)$, a class of $3 \times 3$ bound entangled states introduced by P.~Horodecki~\cite{Horo97}.
These states are entangled for all $0 < a < 1$, yet cannot be detected by the PPT criterion.
The explicit form of $\rho^H(a)$ is provided in Appendix F1 \cite{supp}.

As shown in \cref{fig:horodecki}(a), our CG-based approach yields significantly improved entanglement bounds $p_{\rm ent, CG}$, outperforming the best known bounds from the generalized Wootters formula~\cite{Chen_2012_estimating}.
Notably, the new entanglement thresholds approach the previously established separability limits $p_{\rm sep, SDP}$, nearly closing the gap.
\cref{fig:horodecki}(b) further quantifies the improvement.
Our method reduces the entanglement-separability gap by approximately an order of magnitude.
Moreover, the entanglement and separability thresholds obtained from our CG-based approach yield a gap that is comparable to the tightest known gap between entanglement and separability bounds, defined by our results and SDP methods.
We also note that in the parameter regime $0 < a < 0.02$, previous results exhibited inconsistencies between the entanglement and separability bounds.
Our method not only narrows the gap but also resolves this contradiction, improving both numerical precision and physical consistency.

%%%%%%%%%%%%%%%%%%%%%%%%%%%%%%%%%%%%%%%%%%%%%%%%%%
%%%%%%%%%%%%%%%%%%%%%%%%%%%%%%%%%%%%%%%%%%%%%%%%%%
\textit{Applications under arbitrary noise channels.---}%%%
From an algorithmic perspective, estimating the white noise robustness of a quantum state $\rho$ is computationally equivalent to certifying entanglement of its noisy counterpart $\rho(p)$.
In SDP-based approaches, this typically requires only an additional variable in the optimization.
In our CG-based framework, the parameter $p$ can be updated dynamically during iterations based on the ratio $r$.

%%%%%%%%%%%%%%%%%%%%%%%%%%%%%%%%%%%%%%%%%%%%%%%%%%%%%%%%%%%%%%%%
\begin{figure}[t]
  \includegraphics[width=0.70\columnwidth]{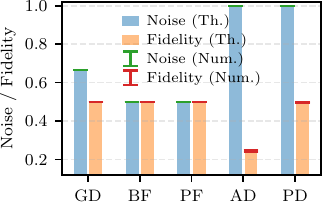}
  \caption{Entanglement robustness thresholds for the Bell state $(\ket{00}+\ket{11})/\sqrt{2}$ under various quantum noise channels: global depolarizing (white noise, GD), bit flip (BF), phase flip (PF), amplitude damping (AD), and phase damping (PD).
  Except for white noise (applied to both qubits), all channels act on one side of the state.
  Bars indicate theoretical thresholds for the noise strength (blue) and the corresponding fidelity with the ideal Bell state (orange).
  Error bars represent the certified interval between the entanglement and separability bounds obtained by our numerical algorithm.
  }
  \label{fig:channel}
\end{figure}
%%%%%%%%%%%%%%%%%%%%%%%%%%%%%%%%%%%%%%%%%%%%%%%%%%%%%%%%%%%%%%%%

Both approaches can generalize to arbitrary linear noise models of the form
$(1 - p)\rho + p \sigma\,, \sigma \in \mathcal{H}$.
However, in realistic settings, quantum noise is often nonlinear and must be modeled as a quantum channel $\Phi$, defined via Kraus operators as $\Phi(\rho) = \sum_i K_i \rho K_i^\dagger$ with $\sum_i K_i^\dagger K_i \leq \openone$.
For instance, the bit-flip channel is defined by $K_0 = \sqrt{1-p}\,\openone$ and $K_1 = \sqrt{p}\,X$.

Different noise models yield different entanglement thresholds, which are of practical importance for experimental applications, e.g., determining fidelity thresholds for maintaining functionality in quantum protocols.
The robustness of entanglement under general quantum channels lies beyond the reach of current SDP methods, but can be naturally addressed within our CG-based framework using the same convergence criterion based on $r_t$.

\cref{fig:channel} presents the entanglement robustness thresholds for the Bell state $(\ket{00} + \ket{11})/\sqrt{2}$ under several common quantum noise channels relevant to quantum computation and teleportation.
In each case, our numerical method yields certified entanglement and separability bounds with tight intervals, represented by error bars.
The results show excellent agreement with known theoretical thresholds, demonstrating the accuracy and reliability of our method.
Further details and comparisons with analytical values are provided in Appendix F2 \cite{supp}.

%%%%%%%%%%%%%%%%%%%%%%%%%%%%%%%%%%%%%%%%%%%%%%%%%%
%%%%%%%%%%%%%%%%%%%%%%%%%%%%%%%%%%%%%%%%%%%%%%%%%%
\textit{Conclusion.---}%%%
We have developed a unified framework for multipartite entanglement characterization based on the conditional gradient (CG) method, encompassing both heuristic and rigorous procedures.
Our approach enables rapid entanglement detection for systems of up to ten qubits and constructs rigorous entanglement witnesses with explicit numerical error control.
By applying this framework to the problem of white noise robustness for bound entangled states, we obtained the optimal entanglement thresholds, thereby closing the gap between entanglement and separability thresholds.
We further demonstrated the applicability of our method in analyzing entanglement robustness under general quantum noise channels, providing accurate thresholds in regimes beyond the reach of SDP-based techniques.
These results establish CG methods as a versatile and scalable tool for entanglement analysis in current experimental scenarios.

Beyond these results, our framework suggests several promising directions for future work.
Exploiting symmetries in quantum states and measurements may simplify optimization and improve interpretability, especially for highly symmetric states such as GHZ or Dicke states.
The method can be extended to compute robustness measures by optimizing over both the separable space and the noisy space.
Incorporating matrix product state (MPS) representations may scale the method to hundreds of qubits.
Additionally, splitting methods can be integrated to address incomplete or partial information, broadening the applicability of our framework to a wider range of quantum characterization settings.

\acknowledgments
We are grateful to Márton Naszodi, H.~Chau Nguyen, Ties-A.~Ohst, Jiangwei Shang, Tamás Vértesi, and Liding Xu for helpful discussions.
This work was supported by the DFG Cluster of Excellence MATH+ (EXC-2046/1, Project No.~390685689) funded by the Deutsche Forschungsgemeinschaft (DFG), and partly funded within the QuantERA II Programme that has received funding from the European Union’s Horizon 2020 research and innovation programme under Grant Agreement No 101017733 (VERIqTAS).

% \bibliography{main}

\begin{thebibliography}{49}%
\makeatletter
\providecommand \@ifxundefined [1]{%
 \@ifx{#1\undefined}
}%
\providecommand \@ifnum [1]{%
 \ifnum #1\expandafter \@firstoftwo
 \else \expandafter \@secondoftwo
 \fi
}%
\providecommand \@ifx [1]{%
 \ifx #1\expandafter \@firstoftwo
 \else \expandafter \@secondoftwo
 \fi
}%
\providecommand \natexlab [1]{#1}%
\providecommand \enquote  [1]{``#1''}%
\providecommand \bibnamefont  [1]{#1}%
\providecommand \bibfnamefont [1]{#1}%
\providecommand \citenamefont [1]{#1}%
\providecommand \href@noop [0]{\@secondoftwo}%
\providecommand \href [0]{\begingroup \@sanitize@url \@href}%
\providecommand \@href[1]{\@@startlink{#1}\@@href}%
\providecommand \@@href[1]{\endgroup#1\@@endlink}%
\providecommand \@sanitize@url [0]{\catcode `\\12\catcode `\$12\catcode
  `\&12\catcode `\#12\catcode `\^12\catcode `\_12\catcode `\%12\relax}%
\providecommand \@@startlink[1]{}%
\providecommand \@@endlink[0]{}%
\providecommand \url  [0]{\begingroup\@sanitize@url \@url }%
\providecommand \@url [1]{\endgroup\@href {#1}{\urlprefix }}%
\providecommand \urlprefix  [0]{URL }%
\providecommand \Eprint [0]{\href }%
\providecommand \doibase [0]{https://doi.org/}%
\providecommand \selectlanguage [0]{\@gobble}%
\providecommand \bibinfo  [0]{\@secondoftwo}%
\providecommand \bibfield  [0]{\@secondoftwo}%
\providecommand \translation [1]{[#1]}%
\providecommand \BibitemOpen [0]{}%
\providecommand \bibitemStop [0]{}%
\providecommand \bibitemNoStop [0]{.\EOS\space}%
\providecommand \EOS [0]{\spacefactor3000\relax}%
\providecommand \BibitemShut  [1]{\csname bibitem#1\endcsname}%
\let\auto@bib@innerbib\@empty
%</preamble>
\bibitem [{\citenamefont {Horodecki}\ \emph {et~al.}(2009)\citenamefont
  {Horodecki}, \citenamefont {Horodecki}, \citenamefont {Horodecki},\ and\
  \citenamefont {Horodecki}}]{HHHH09}%
  \BibitemOpen
  \bibfield  {author} {\bibinfo {author} {\bibfnamefont {R.}~\bibnamefont
  {Horodecki}}, \bibinfo {author} {\bibfnamefont {P.}~\bibnamefont
  {Horodecki}}, \bibinfo {author} {\bibfnamefont {M.}~\bibnamefont
  {Horodecki}},\ and\ \bibinfo {author} {\bibfnamefont {K.}~\bibnamefont
  {Horodecki}},\ }\bibfield  {title} {\bibinfo {title} {Quantum entanglement},\
  }\href {https://doi.org/10.1103/RevModPhys.81.865} {\bibfield  {journal}
  {\bibinfo  {journal} {Rev. Mod. Phys.}\ }\textbf {\bibinfo {volume} {81}},\
  \bibinfo {pages} {865} (\bibinfo {year} {2009})}\BibitemShut {NoStop}%
\bibitem [{\citenamefont {G\"uhne}\ and\ \citenamefont {T\'oth}(2009)}]{GT09}%
  \BibitemOpen
  \bibfield  {author} {\bibinfo {author} {\bibfnamefont {O.}~\bibnamefont
  {G\"uhne}}\ and\ \bibinfo {author} {\bibfnamefont {G.}~\bibnamefont
  {T\'oth}},\ }\bibfield  {title} {\bibinfo {title} {Entanglement detection},\
  }\href {https://doi.org/https://doi.org/10.1016/j.physrep.2009.02.004}
  {\bibfield  {journal} {\bibinfo  {journal} {Phys. Rep.}\ }\textbf {\bibinfo
  {volume} {474}},\ \bibinfo {pages} {1} (\bibinfo {year} {2009})}\BibitemShut
  {NoStop}%
\bibitem [{\citenamefont {Peres}(1996)}]{Per96}%
  \BibitemOpen
  \bibfield  {author} {\bibinfo {author} {\bibfnamefont {A.}~\bibnamefont
  {Peres}},\ }\bibfield  {title} {\bibinfo {title} {Separability criterion for
  density matrices},\ }\href {https://doi.org/10.1103/PhysRevLett.77.1413}
  {\bibfield  {journal} {\bibinfo  {journal} {Phys. Rev. Lett.}\ }\textbf
  {\bibinfo {volume} {77}},\ \bibinfo {pages} {1413} (\bibinfo {year}
  {1996})}\BibitemShut {NoStop}%
\bibitem [{\citenamefont {Horodecki}(1997)}]{Horo97}%
  \BibitemOpen
  \bibfield  {author} {\bibinfo {author} {\bibfnamefont {P.}~\bibnamefont
  {Horodecki}},\ }\bibfield  {title} {\bibinfo {title} {Separability criterion
  and inseparable mixed states with positive partial transposition},\ }\href
  {https://doi.org/https://doi.org/10.1016/S0375-9601(97)00416-7} {\bibfield
  {journal} {\bibinfo  {journal} {Phys. Lett. A}\ }\textbf {\bibinfo {volume}
  {232}},\ \bibinfo {pages} {333} (\bibinfo {year} {1997})}\BibitemShut
  {NoStop}%
\bibitem [{\citenamefont {Chen}\ and\ \citenamefont {Wu}(2003)}]{CW03}%
  \BibitemOpen
  \bibfield  {author} {\bibinfo {author} {\bibfnamefont {K.}~\bibnamefont
  {Chen}}\ and\ \bibinfo {author} {\bibfnamefont {L.-A.}\ \bibnamefont {Wu}},\
  }\bibfield  {title} {\bibinfo {title} {A matrix realignment method for
  recognizing entanglement},\ }\href {https://doi.org/10.26421/QIC3.3-1}
  {\bibfield  {journal} {\bibinfo  {journal} {Quantum Info. Comput.}\ }\textbf
  {\bibinfo {volume} {3}},\ \bibinfo {pages} {193} (\bibinfo {year}
  {2003})}\BibitemShut {NoStop}%
\bibitem [{\citenamefont {Rudolph}(2005)}]{Rud05}%
  \BibitemOpen
  \bibfield  {author} {\bibinfo {author} {\bibfnamefont {O.}~\bibnamefont
  {Rudolph}},\ }\bibfield  {title} {\bibinfo {title} {Further results on the
  cross norm criterion for separability},\ }\href
  {https://doi.org/10.1007/s11128-005-5664-1} {\bibfield  {journal} {\bibinfo
  {journal} {Quantum Inf. Proc.}\ }\textbf {\bibinfo {volume} {4}},\ \bibinfo
  {pages} {219} (\bibinfo {year} {2005})}\BibitemShut {NoStop}%
\bibitem [{\citenamefont {Novo}\ \emph {et~al.}(2013)\citenamefont {Novo},
  \citenamefont {Moroder},\ and\ \citenamefont {G\"uhne}}]{NMG13}%
  \BibitemOpen
  \bibfield  {author} {\bibinfo {author} {\bibfnamefont {L.}~\bibnamefont
  {Novo}}, \bibinfo {author} {\bibfnamefont {T.}~\bibnamefont {Moroder}},\ and\
  \bibinfo {author} {\bibfnamefont {O.}~\bibnamefont {G\"uhne}},\ }\bibfield
  {title} {\bibinfo {title} {Genuine multiparticle entanglement of
  permutationally invariant states},\ }\href
  {https://doi.org/10.1103/PhysRevA.88.012305} {\bibfield  {journal} {\bibinfo
  {journal} {Phys. Rev. A}\ }\textbf {\bibinfo {volume} {88}},\ \bibinfo
  {pages} {012305} (\bibinfo {year} {2013})}\BibitemShut {NoStop}%
\bibitem [{\citenamefont {Horodecki}\ \emph {et~al.}(1996)\citenamefont
  {Horodecki}, \citenamefont {Horodecki},\ and\ \citenamefont
  {Horodecki}}]{HHH96}%
  \BibitemOpen
  \bibfield  {author} {\bibinfo {author} {\bibfnamefont {M.}~\bibnamefont
  {Horodecki}}, \bibinfo {author} {\bibfnamefont {P.}~\bibnamefont
  {Horodecki}},\ and\ \bibinfo {author} {\bibfnamefont {R.}~\bibnamefont
  {Horodecki}},\ }\bibfield  {title} {\bibinfo {title} {Separability of mixed
  states: necessary and sufficient conditions},\ }\href
  {https://doi.org/https://doi.org/10.1016/S0375-9601(96)00706-2} {\bibfield
  {journal} {\bibinfo  {journal} {Phys. Lett. A}\ }\textbf {\bibinfo {volume}
  {223}},\ \bibinfo {pages} {1} (\bibinfo {year} {1996})}\BibitemShut {NoStop}%
\bibitem [{\citenamefont {Terhal}(2000)}]{Ter00}%
  \BibitemOpen
  \bibfield  {author} {\bibinfo {author} {\bibfnamefont {B.~M.}\ \bibnamefont
  {Terhal}},\ }\bibfield  {title} {\bibinfo {title} {{B}ell inequalities and
  the separability criterion},\ }\href
  {https://doi.org/10.1016/s0375-9601(00)00401-1} {\bibfield  {journal}
  {\bibinfo  {journal} {Phys. Lett. A}\ }\textbf {\bibinfo {volume} {271}},\
  \bibinfo {pages} {319} (\bibinfo {year} {2000})}\BibitemShut {NoStop}%
\bibitem [{\citenamefont {Lewenstein}\ \emph {et~al.}(2000)\citenamefont
  {Lewenstein}, \citenamefont {Kraus}, \citenamefont {Cirac},\ and\
  \citenamefont {Horodecki}}]{LKCH00}%
  \BibitemOpen
  \bibfield  {author} {\bibinfo {author} {\bibfnamefont {M.}~\bibnamefont
  {Lewenstein}}, \bibinfo {author} {\bibfnamefont {B.}~\bibnamefont {Kraus}},
  \bibinfo {author} {\bibfnamefont {J.~I.}\ \bibnamefont {Cirac}},\ and\
  \bibinfo {author} {\bibfnamefont {P.}~\bibnamefont {Horodecki}},\ }\bibfield
  {title} {\bibinfo {title} {Optimization of entanglement witnesses},\ }\href
  {https://doi.org/10.1103/PhysRevA.62.052310} {\bibfield  {journal} {\bibinfo
  {journal} {Phys. Rev. A}\ }\textbf {\bibinfo {volume} {62}},\ \bibinfo
  {pages} {052310} (\bibinfo {year} {2000})}\BibitemShut {NoStop}%
\bibitem [{\citenamefont {Bruß}\ \emph {et~al.}(2002)\citenamefont {Bruß},
  \citenamefont {Cirac}, \citenamefont {Horodecki}, \citenamefont {Hulpke},
  \citenamefont {Kraus}, \citenamefont {Lewenstein},\ and\ \citenamefont
  {Sanpera}}]{BCH+02}%
  \BibitemOpen
  \bibfield  {author} {\bibinfo {author} {\bibfnamefont {D.}~\bibnamefont
  {Bruß}}, \bibinfo {author} {\bibfnamefont {J.~I.}\ \bibnamefont {Cirac}},
  \bibinfo {author} {\bibfnamefont {P.}~\bibnamefont {Horodecki}}, \bibinfo
  {author} {\bibfnamefont {F.}~\bibnamefont {Hulpke}}, \bibinfo {author}
  {\bibfnamefont {B.}~\bibnamefont {Kraus}}, \bibinfo {author} {\bibfnamefont
  {M.}~\bibnamefont {Lewenstein}},\ and\ \bibinfo {author} {\bibfnamefont
  {A.}~\bibnamefont {Sanpera}},\ }\bibfield  {title} {\bibinfo {title}
  {Reflections upon separability and distillability},\ }\href
  {https://doi.org/10.1080/09500340110105975} {\bibfield  {journal} {\bibinfo
  {journal} {J. Mod. Opt.}\ }\textbf {\bibinfo {volume} {49}},\ \bibinfo
  {pages} {1399} (\bibinfo {year} {2002})}\BibitemShut {NoStop}%
\bibitem [{\citenamefont {Bourennane}\ \emph {et~al.}(2004)\citenamefont
  {Bourennane}, \citenamefont {Eibl}, \citenamefont {Kurtsiefer}, \citenamefont
  {Gaertner}, \citenamefont {Weinfurter}, \citenamefont {G\"uhne},
  \citenamefont {Hyllus}, \citenamefont {Bru\ss{}}, \citenamefont
  {Lewenstein},\ and\ \citenamefont {Sanpera}}]{BEK+04}%
  \BibitemOpen
  \bibfield  {author} {\bibinfo {author} {\bibfnamefont {M.}~\bibnamefont
  {Bourennane}}, \bibinfo {author} {\bibfnamefont {M.}~\bibnamefont {Eibl}},
  \bibinfo {author} {\bibfnamefont {C.}~\bibnamefont {Kurtsiefer}}, \bibinfo
  {author} {\bibfnamefont {S.}~\bibnamefont {Gaertner}}, \bibinfo {author}
  {\bibfnamefont {H.}~\bibnamefont {Weinfurter}}, \bibinfo {author}
  {\bibfnamefont {O.}~\bibnamefont {G\"uhne}}, \bibinfo {author} {\bibfnamefont
  {P.}~\bibnamefont {Hyllus}}, \bibinfo {author} {\bibfnamefont
  {D.}~\bibnamefont {Bru\ss{}}}, \bibinfo {author} {\bibfnamefont
  {M.}~\bibnamefont {Lewenstein}},\ and\ \bibinfo {author} {\bibfnamefont
  {A.}~\bibnamefont {Sanpera}},\ }\bibfield  {title} {\bibinfo {title}
  {Experimental detection of multipartite entanglement using witness
  operators},\ }\href {https://doi.org/10.1103/PhysRevLett.92.087902}
  {\bibfield  {journal} {\bibinfo  {journal} {Phys. Rev. Lett.}\ }\textbf
  {\bibinfo {volume} {92}},\ \bibinfo {pages} {087902} (\bibinfo {year}
  {2004})}\BibitemShut {NoStop}%
\bibitem [{\citenamefont {Doherty}\ \emph {et~al.}(2004)\citenamefont
  {Doherty}, \citenamefont {Parrilo},\ and\ \citenamefont
  {Spedalieri}}]{DPS04}%
  \BibitemOpen
  \bibfield  {author} {\bibinfo {author} {\bibfnamefont {A.~C.}\ \bibnamefont
  {Doherty}}, \bibinfo {author} {\bibfnamefont {P.~A.}\ \bibnamefont
  {Parrilo}},\ and\ \bibinfo {author} {\bibfnamefont {F.~M.}\ \bibnamefont
  {Spedalieri}},\ }\bibfield  {title} {\bibinfo {title} {Complete family of
  separability criteria},\ }\href {https://doi.org/10.1103/PhysRevA.69.022308}
  {\bibfield  {journal} {\bibinfo  {journal} {Phys. Rev. A}\ }\textbf {\bibinfo
  {volume} {69}},\ \bibinfo {pages} {022308} (\bibinfo {year}
  {2004})}\BibitemShut {NoStop}%
\bibitem [{\citenamefont {Navascu\'es}\ \emph {et~al.}(2009)\citenamefont
  {Navascu\'es}, \citenamefont {Owari},\ and\ \citenamefont {Plenio}}]{NOP09}%
  \BibitemOpen
  \bibfield  {author} {\bibinfo {author} {\bibfnamefont {M.}~\bibnamefont
  {Navascu\'es}}, \bibinfo {author} {\bibfnamefont {M.}~\bibnamefont {Owari}},\
  and\ \bibinfo {author} {\bibfnamefont {M.~B.}\ \bibnamefont {Plenio}},\
  }\bibfield  {title} {\bibinfo {title} {Complete criterion for separability
  detection},\ }\href {https://doi.org/10.1103/PhysRevLett.103.160404}
  {\bibfield  {journal} {\bibinfo  {journal} {Phys. Rev. Lett.}\ }\textbf
  {\bibinfo {volume} {103}},\ \bibinfo {pages} {160404} (\bibinfo {year}
  {2009})}\BibitemShut {NoStop}%
\bibitem [{\citenamefont {Ohst}\ \emph {et~al.}(2024)\citenamefont {Ohst},
  \citenamefont {Yu}, \citenamefont {G\"uhne},\ and\ \citenamefont
  {Nguyen}}]{OYGN24}%
  \BibitemOpen
  \bibfield  {author} {\bibinfo {author} {\bibfnamefont {T.-A.}\ \bibnamefont
  {Ohst}}, \bibinfo {author} {\bibfnamefont {X.-D.}\ \bibnamefont {Yu}},
  \bibinfo {author} {\bibfnamefont {O.}~\bibnamefont {G\"uhne}},\ and\ \bibinfo
  {author} {\bibfnamefont {H.~C.}\ \bibnamefont {Nguyen}},\ }\bibfield  {title}
  {\bibinfo {title} {Certifying quantum separability with adaptive polytopes},\
  }\href {https://doi.org/10.21468/SciPostPhys.16.3.063} {\bibfield  {journal}
  {\bibinfo  {journal} {SciPost Phys.}\ }\textbf {\bibinfo {volume} {16}},\
  \bibinfo {pages} {063} (\bibinfo {year} {2024})}\BibitemShut {NoStop}%
\bibitem [{\citenamefont {Song}\ \emph {et~al.}(2017)\citenamefont {Song},
  \citenamefont {Xu}, \citenamefont {Liu}, \citenamefont {Yang}, \citenamefont
  {Zheng}, \citenamefont {Deng}, \citenamefont {Xie}, \citenamefont {Huang},
  \citenamefont {Guo}, \citenamefont {Zhang}, \citenamefont {Zhang},
  \citenamefont {Xu}, \citenamefont {Zheng}, \citenamefont {Zhu}, \citenamefont
  {Wang}, \citenamefont {Chen}, \citenamefont {Lu}, \citenamefont {Han},\ and\
  \citenamefont {Pan}}]{SXL+17}%
  \BibitemOpen
  \bibfield  {author} {\bibinfo {author} {\bibfnamefont {C.}~\bibnamefont
  {Song}}, \bibinfo {author} {\bibfnamefont {K.}~\bibnamefont {Xu}}, \bibinfo
  {author} {\bibfnamefont {W.}~\bibnamefont {Liu}}, \bibinfo {author}
  {\bibfnamefont {C.-p.}\ \bibnamefont {Yang}}, \bibinfo {author}
  {\bibfnamefont {S.-B.}\ \bibnamefont {Zheng}}, \bibinfo {author}
  {\bibfnamefont {H.}~\bibnamefont {Deng}}, \bibinfo {author} {\bibfnamefont
  {Q.}~\bibnamefont {Xie}}, \bibinfo {author} {\bibfnamefont {K.}~\bibnamefont
  {Huang}}, \bibinfo {author} {\bibfnamefont {Q.}~\bibnamefont {Guo}}, \bibinfo
  {author} {\bibfnamefont {L.}~\bibnamefont {Zhang}}, \bibinfo {author}
  {\bibfnamefont {P.}~\bibnamefont {Zhang}}, \bibinfo {author} {\bibfnamefont
  {D.}~\bibnamefont {Xu}}, \bibinfo {author} {\bibfnamefont {D.}~\bibnamefont
  {Zheng}}, \bibinfo {author} {\bibfnamefont {X.}~\bibnamefont {Zhu}}, \bibinfo
  {author} {\bibfnamefont {H.}~\bibnamefont {Wang}}, \bibinfo {author}
  {\bibfnamefont {Y.-A.}\ \bibnamefont {Chen}}, \bibinfo {author}
  {\bibfnamefont {C.-Y.}\ \bibnamefont {Lu}}, \bibinfo {author} {\bibfnamefont
  {S.}~\bibnamefont {Han}},\ and\ \bibinfo {author} {\bibfnamefont {J.-W.}\
  \bibnamefont {Pan}},\ }\bibfield  {title} {\bibinfo {title} {10-qubit
  entanglement and parallel logic operations with a superconducting circuit},\
  }\href {https://doi.org/10.1103/PhysRevLett.119.180511} {\bibfield  {journal}
  {\bibinfo  {journal} {Phys. Rev. Lett.}\ }\textbf {\bibinfo {volume} {119}},\
  \bibinfo {pages} {180511} (\bibinfo {year} {2017})}\BibitemShut {NoStop}%
\bibitem [{\citenamefont {Kampermann}\ \emph {et~al.}(2012)\citenamefont
  {Kampermann}, \citenamefont {G\"uhne}, \citenamefont {Wilmott},\ and\
  \citenamefont {Bru\ss{}}}]{KGWB12}%
  \BibitemOpen
  \bibfield  {author} {\bibinfo {author} {\bibfnamefont {H.}~\bibnamefont
  {Kampermann}}, \bibinfo {author} {\bibfnamefont {O.}~\bibnamefont {G\"uhne}},
  \bibinfo {author} {\bibfnamefont {C.}~\bibnamefont {Wilmott}},\ and\ \bibinfo
  {author} {\bibfnamefont {D.}~\bibnamefont {Bru\ss{}}},\ }\bibfield  {title}
  {\bibinfo {title} {Algorithm for characterizing stochastic local operations
  and classical communication classes of multiparticle entanglement},\ }\href
  {https://doi.org/10.1103/PhysRevA.86.032307} {\bibfield  {journal} {\bibinfo
  {journal} {Phys. Rev. A}\ }\textbf {\bibinfo {volume} {86}},\ \bibinfo
  {pages} {032307} (\bibinfo {year} {2012})}\BibitemShut {NoStop}%
\bibitem [{\citenamefont {Shang}\ and\ \citenamefont {G\"uhne}(2018)}]{SG18}%
  \BibitemOpen
  \bibfield  {author} {\bibinfo {author} {\bibfnamefont {J.}~\bibnamefont
  {Shang}}\ and\ \bibinfo {author} {\bibfnamefont {O.}~\bibnamefont
  {G\"uhne}},\ }\bibfield  {title} {\bibinfo {title} {Convex optimization over
  classes of multiparticle entanglement},\ }\href
  {https://doi.org/10.1103/PhysRevLett.120.050506} {\bibfield  {journal}
  {\bibinfo  {journal} {Phys. Rev. Lett.}\ }\textbf {\bibinfo {volume} {120}},\
  \bibinfo {pages} {050506} (\bibinfo {year} {2018})}\BibitemShut {NoStop}%
\bibitem [{\citenamefont {Wie\'sniak}\ \emph {et~al.}(2020)\citenamefont
  {Wie\'sniak}, \citenamefont {Pandya}, \citenamefont {Sakarya},\ and\
  \citenamefont {Woloncewicz}}]{WPSW20}%
  \BibitemOpen
  \bibfield  {author} {\bibinfo {author} {\bibfnamefont {M.}~\bibnamefont
  {Wie\'sniak}}, \bibinfo {author} {\bibfnamefont {P.}~\bibnamefont {Pandya}},
  \bibinfo {author} {\bibfnamefont {O.}~\bibnamefont {Sakarya}},\ and\ \bibinfo
  {author} {\bibfnamefont {B.}~\bibnamefont {Woloncewicz}},\ }\bibfield
  {title} {\bibinfo {title} {Distance between bound entangled states from
  unextendible product bases and separable states},\ }\href
  {https://doi.org/10.3390/quantum2010004} {\bibfield  {journal} {\bibinfo
  {journal} {Quantum Rep.}\ }\textbf {\bibinfo {volume} {2}},\ \bibinfo {pages}
  {49} (\bibinfo {year} {2020})}\BibitemShut {NoStop}%
\bibitem [{\citenamefont {Pandya}\ \emph {et~al.}(2020)\citenamefont {Pandya},
  \citenamefont {Sakarya},\ and\ \citenamefont {Wie\'sniak}}]{PSW20}%
  \BibitemOpen
  \bibfield  {author} {\bibinfo {author} {\bibfnamefont {P.}~\bibnamefont
  {Pandya}}, \bibinfo {author} {\bibfnamefont {O.}~\bibnamefont {Sakarya}},\
  and\ \bibinfo {author} {\bibfnamefont {M.}~\bibnamefont {Wie\'sniak}},\
  }\bibfield  {title} {\bibinfo {title} {Hilbert-{S}chmidt distance and
  entanglement witnessing},\ }\href
  {https://doi.org/10.1103/PhysRevA.102.012409} {\bibfield  {journal} {\bibinfo
   {journal} {Phys. Rev. A}\ }\textbf {\bibinfo {volume} {102}},\ \bibinfo
  {pages} {012409} (\bibinfo {year} {2020})}\BibitemShut {NoStop}%
\bibitem [{\citenamefont {Hu}\ \emph {et~al.}(2023)\citenamefont {Hu},
  \citenamefont {Liu},\ and\ \citenamefont {Shang}}]{Hu_2023_algorithm}%
  \BibitemOpen
  \bibfield  {author} {\bibinfo {author} {\bibfnamefont {Y.}~\bibnamefont
  {Hu}}, \bibinfo {author} {\bibfnamefont {Y.-C.}\ \bibnamefont {Liu}},\ and\
  \bibinfo {author} {\bibfnamefont {J.}~\bibnamefont {Shang}},\ }\bibfield
  {title} {\bibinfo {title} {Algorithm for evaluating distance-based
  entanglement measures},\ }\href {https://doi.org/10.1088/1674-1056/acd5c5}
  {\bibfield  {journal} {\bibinfo  {journal} {Chinese Phys. B}\ }\textbf
  {\bibinfo {volume} {32}},\ \bibinfo {pages} {080307} (\bibinfo {year}
  {2023})}\BibitemShut {NoStop}%
\bibitem [{\citenamefont {Gurvits}(2003)}]{Gur03}%
  \BibitemOpen
  \bibfield  {author} {\bibinfo {author} {\bibfnamefont {L.}~\bibnamefont
  {Gurvits}},\ }\bibfield  {title} {\bibinfo {title} {Classical deterministic
  complexity of {E}dmonds' problem and quantum entanglement},\ }in\ \href
  {https://doi.org/10.1145/780542.780545} {\emph {\bibinfo {booktitle} {Proc.
  STOC'03}}},\ \bibinfo {series and number} {STOC '03}\ (\bibinfo  {publisher}
  {Association for Computing Machinery},\ \bibinfo {address} {New York, NY,
  USA},\ \bibinfo {year} {2003})\ pp.\ \bibinfo {pages} {10--19}\BibitemShut
  {NoStop}%
\bibitem [{\citenamefont {Gilbert}(1966)}]{Gil66}%
  \BibitemOpen
  \bibfield  {author} {\bibinfo {author} {\bibfnamefont {E.~G.}\ \bibnamefont
  {Gilbert}},\ }\bibfield  {title} {\bibinfo {title} {An iterative procedure
  for computing the minimum of a quadratic form on a convex set},\ }\href
  {https://doi.org/10.1137/0304007} {\bibfield  {journal} {\bibinfo  {journal}
  {J. SIAM Control}\ }\textbf {\bibinfo {volume} {4}},\ \bibinfo {pages} {61}
  (\bibinfo {year} {1966})}\BibitemShut {NoStop}%
\bibitem [{\citenamefont {Frank}\ and\ \citenamefont {Wolfe}(1956)}]{FW56}%
  \BibitemOpen
  \bibfield  {author} {\bibinfo {author} {\bibfnamefont {M.}~\bibnamefont
  {Frank}}\ and\ \bibinfo {author} {\bibfnamefont {P.}~\bibnamefont {Wolfe}},\
  }\bibfield  {title} {\bibinfo {title} {An algorithm for quadratic
  programming},\ }\href
  {https://doi.org/https://doi.org/10.1002/nav.3800030109} {\bibfield
  {journal} {\bibinfo  {journal} {Nav. Res. Logist. Q.}\ }\textbf {\bibinfo
  {volume} {3}},\ \bibinfo {pages} {95} (\bibinfo {year} {1956})}\BibitemShut
  {NoStop}%
\bibitem [{\citenamefont {Levitin}\ and\ \citenamefont {Polyak}(1966)}]{LP66}%
  \BibitemOpen
  \bibfield  {author} {\bibinfo {author} {\bibfnamefont {E.~S.}\ \bibnamefont
  {Levitin}}\ and\ \bibinfo {author} {\bibfnamefont {B.~T.}\ \bibnamefont
  {Polyak}},\ }\bibfield  {title} {\bibinfo {title} {Constrained minimization
  methods},\ }\href {https://doi.org/10.1016/0041-5553(66)90114-5} {\bibfield
  {journal} {\bibinfo  {journal} {USSR Comput. Math. \& Math. Phys.}\ }\textbf
  {\bibinfo {volume} {6}},\ \bibinfo {pages} {1} (\bibinfo {year}
  {1966})}\BibitemShut {NoStop}%
\bibitem [{\citenamefont {Wolfe}(1976)}]{wolfe1976normpoint}%
  \BibitemOpen
  \bibfield  {author} {\bibinfo {author} {\bibfnamefont {P.}~\bibnamefont
  {Wolfe}},\ }\bibfield  {title} {\bibinfo {title} {Finding the nearest point
  in a polytope},\ }\href {https://doi.org/10.1007/BF01580381} {\bibfield
  {journal} {\bibinfo  {journal} {Math. Program.}\ }\textbf {\bibinfo {volume}
  {11}},\ \bibinfo {pages} {128} (\bibinfo {year} {1976})}\BibitemShut
  {NoStop}%
\bibitem [{\citenamefont {Gu{\'e}lat}\ and\ \citenamefont
  {Marcotte}(1986)}]{guelat1986some}%
  \BibitemOpen
  \bibfield  {author} {\bibinfo {author} {\bibfnamefont {J.}~\bibnamefont
  {Gu{\'e}lat}}\ and\ \bibinfo {author} {\bibfnamefont {P.}~\bibnamefont
  {Marcotte}},\ }\bibfield  {title} {\bibinfo {title} {Some comments on
  {Wolfe}'s `away step'},\ }\href@noop {} {\bibfield  {journal} {\bibinfo
  {journal} {Math. Program.}\ }\textbf {\bibinfo {volume} {35}},\ \bibinfo
  {pages} {110} (\bibinfo {year} {1986})}\BibitemShut {NoStop}%
\bibitem [{\citenamefont {Holloway}(1974)}]{holloway1974extension}%
  \BibitemOpen
  \bibfield  {author} {\bibinfo {author} {\bibfnamefont {C.~A.}\ \bibnamefont
  {Holloway}},\ }\bibfield  {title} {\bibinfo {title} {An extension of the
  {Frank} and {Wolfe} method of feasible directions},\ }\href
  {https://doi.org/10.1007/BF01580219} {\bibfield  {journal} {\bibinfo
  {journal} {Math. Program.}\ }\textbf {\bibinfo {volume} {6}},\ \bibinfo
  {pages} {14–27} (\bibinfo {year} {1974})}\BibitemShut {NoStop}%
\bibitem [{\citenamefont {Tsuji}\ \emph {et~al.}(2022)\citenamefont {Tsuji},
  \citenamefont {Tanaka},\ and\ \citenamefont {Pokutta}}]{tsuji2022pairwise}%
  \BibitemOpen
  \bibfield  {author} {\bibinfo {author} {\bibfnamefont {K.~K.}\ \bibnamefont
  {Tsuji}}, \bibinfo {author} {\bibfnamefont {K.}~\bibnamefont {Tanaka}},\ and\
  \bibinfo {author} {\bibfnamefont {S.}~\bibnamefont {Pokutta}},\ }\bibfield
  {title} {\bibinfo {title} {Pairwise conditional gradients without swap steps
  and sparser kernel herding},\ }in\ \href {https://arxiv.org/abs/2110.12650}
  {\emph {\bibinfo {booktitle} {International Conference on Machine
  Learning}}}\ (\bibinfo {organization} {PMLR},\ \bibinfo {year} {2022})\ pp.\
  \bibinfo {pages} {21864--21883}\BibitemShut {NoStop}%
\bibitem [{\citenamefont {Braun}\ \emph {et~al.}(2017)\citenamefont {Braun},
  \citenamefont {Pokutta},\ and\ \citenamefont {Zink}}]{pok17lazy}%
  \BibitemOpen
  \bibfield  {author} {\bibinfo {author} {\bibfnamefont {G.}~\bibnamefont
  {Braun}}, \bibinfo {author} {\bibfnamefont {S.}~\bibnamefont {Pokutta}},\
  and\ \bibinfo {author} {\bibfnamefont {D.}~\bibnamefont {Zink}},\ }\bibfield
  {title} {\bibinfo {title} {Lazifying conditional gradient algorithms},\ }in\
  \href {https://arxiv.org/abs/1610.05120} {\emph {\bibinfo {booktitle}
  {Proceedings of the 34th International Conference on Machine Learning}}}\
  (\bibinfo {year} {2017})\ pp.\ \bibinfo {pages} {566--575}\BibitemShut
  {NoStop}%
\bibitem [{\citenamefont {Halbey}\ \emph {et~al.}(2025)\citenamefont {Halbey},
  \citenamefont {Rakotomandimby}, \citenamefont {Besançon}, \citenamefont
  {Designolle},\ and\ \citenamefont
  {Pokutta}}]{halbey2025efficientquadraticcorrectionsfrankwolfe}%
  \BibitemOpen
  \bibfield  {author} {\bibinfo {author} {\bibfnamefont {J.}~\bibnamefont
  {Halbey}}, \bibinfo {author} {\bibfnamefont {S.}~\bibnamefont
  {Rakotomandimby}}, \bibinfo {author} {\bibfnamefont {M.}~\bibnamefont
  {Besançon}}, \bibinfo {author} {\bibfnamefont {S.}~\bibnamefont
  {Designolle}},\ and\ \bibinfo {author} {\bibfnamefont {S.}~\bibnamefont
  {Pokutta}},\ }\bibfield  {title} {\bibinfo {title} {Efficient quadratic
  corrections for {F}rank-{W}olfe algorithms},\ }\href
  {https://arxiv.org/abs/2506.02635} {\bibfield  {journal} {\bibinfo  {journal}
  {arXiv:2506.02635}\ } (\bibinfo {year} {2025})}\BibitemShut {NoStop}%
\bibitem [{sup()}]{supp}%
  \BibitemOpen
  \href@noop {} {}\bibinfo {note} {See Supplemental Material for the
  Appendixes.}\BibitemShut {Stop}%
\bibitem [{\citenamefont {Bomze}\ \emph {et~al.}(2021)\citenamefont {Bomze},
  \citenamefont {Rinaldi},\ and\ \citenamefont {Zeffiro}}]{BRZ21}%
  \BibitemOpen
  \bibfield  {author} {\bibinfo {author} {\bibfnamefont {I.~M.}\ \bibnamefont
  {Bomze}}, \bibinfo {author} {\bibfnamefont {F.}~\bibnamefont {Rinaldi}},\
  and\ \bibinfo {author} {\bibfnamefont {D.}~\bibnamefont {Zeffiro}},\
  }\bibfield  {title} {\bibinfo {title} {{F}rank-{W}olfe and friends: a journey
  into projection-free first-order optimization methods},\ }\href
  {https://doi.org/10.1007/s10288-021-00493-y} {\bibfield  {journal} {\bibinfo
  {journal} {4OR}\ }\textbf {\bibinfo {volume} {19}},\ \bibinfo {pages} {313}
  (\bibinfo {year} {2021})}\BibitemShut {NoStop}%
\bibitem [{\citenamefont {Braun}\ \emph {et~al.}(2022)\citenamefont {Braun},
  \citenamefont {Carderera}, \citenamefont {Combettes}, \citenamefont
  {Hassani}, \citenamefont {Karbasi}, \citenamefont {Mokhtari},\ and\
  \citenamefont {Pokutta}}]{BCC+22}%
  \BibitemOpen
  \bibfield  {author} {\bibinfo {author} {\bibfnamefont {G.}~\bibnamefont
  {Braun}}, \bibinfo {author} {\bibfnamefont {A.}~\bibnamefont {Carderera}},
  \bibinfo {author} {\bibfnamefont {C.~W.}\ \bibnamefont {Combettes}}, \bibinfo
  {author} {\bibfnamefont {H.}~\bibnamefont {Hassani}}, \bibinfo {author}
  {\bibfnamefont {A.}~\bibnamefont {Karbasi}}, \bibinfo {author} {\bibfnamefont
  {A.}~\bibnamefont {Mokhtari}},\ and\ \bibinfo {author} {\bibfnamefont
  {S.}~\bibnamefont {Pokutta}},\ }\bibfield  {title} {\bibinfo {title}
  {Conditional gradient methods},\ }\href {https://arxiv.org/abs/2211.14103}
  {\bibfield  {journal} {\bibinfo  {journal} {arXiv:2211.14103}\ } (\bibinfo
  {year} {2022})}\BibitemShut {NoStop}%
\bibitem [{Note1()}]{Note1}%
  \BibitemOpen
  \bibinfo {note} {\label {code}See our code as a Julia package at \protect
  \url {https://github.com/ZIB-IOL/EntanglementDetection.jl}.}\BibitemShut
  {Stop}%
\bibitem [{\citenamefont {Gühne}\ and\ \citenamefont
  {Seevinck}(2010)}]{Guhne_2010_separability}%
  \BibitemOpen
  \bibfield  {author} {\bibinfo {author} {\bibfnamefont {O.}~\bibnamefont
  {Gühne}}\ and\ \bibinfo {author} {\bibfnamefont {M.}~\bibnamefont
  {Seevinck}},\ }\bibfield  {title} {\bibinfo {title} {Separability criteria
  for genuine multiparticle entanglement},\ }\href
  {https://doi.org/10.1088/1367-2630/12/5/053002} {\bibfield  {journal}
  {\bibinfo  {journal} {New J. Phys.}\ }\textbf {\bibinfo {volume} {12}},\
  \bibinfo {pages} {053002} (\bibinfo {year} {2010})}\BibitemShut {NoStop}%
\bibitem [{\citenamefont {Gao}\ and\ \citenamefont
  {Hong}(2010)}]{Gao_2010_detection}%
  \BibitemOpen
  \bibfield  {author} {\bibinfo {author} {\bibfnamefont {T.}~\bibnamefont
  {Gao}}\ and\ \bibinfo {author} {\bibfnamefont {Y.}~\bibnamefont {Hong}},\
  }\bibfield  {title} {\bibinfo {title} {Detection of genuinely entangled and
  nonseparable $n$-partite quantum states},\ }\href
  {https://doi.org/10.1103/PhysRevA.82.062113} {\bibfield  {journal} {\bibinfo
  {journal} {Phys. Rev. A}\ }\textbf {\bibinfo {volume} {82}},\ \bibinfo
  {pages} {062113} (\bibinfo {year} {2010})}\BibitemShut {NoStop}%
\bibitem [{\citenamefont {Jungnitsch}\ \emph {et~al.}(2011)\citenamefont
  {Jungnitsch}, \citenamefont {Moroder},\ and\ \citenamefont
  {G\"uhne}}]{Jungnitsch_2011_taming}%
  \BibitemOpen
  \bibfield  {author} {\bibinfo {author} {\bibfnamefont {B.}~\bibnamefont
  {Jungnitsch}}, \bibinfo {author} {\bibfnamefont {T.}~\bibnamefont
  {Moroder}},\ and\ \bibinfo {author} {\bibfnamefont {O.}~\bibnamefont
  {G\"uhne}},\ }\bibfield  {title} {\bibinfo {title} {Taming multiparticle
  entanglement},\ }\href {https://doi.org/10.1103/PhysRevLett.106.190502}
  {\bibfield  {journal} {\bibinfo  {journal} {Phys. Rev. Lett.}\ }\textbf
  {\bibinfo {volume} {106}},\ \bibinfo {pages} {190502} (\bibinfo {year}
  {2011})}\BibitemShut {NoStop}%
\bibitem [{\citenamefont {Ananth}\ and\ \citenamefont
  {Senthilvelan}(2016)}]{Ananth_2016_nonkseparability}%
  \BibitemOpen
  \bibfield  {author} {\bibinfo {author} {\bibfnamefont {N.}~\bibnamefont
  {Ananth}}\ and\ \bibinfo {author} {\bibfnamefont {M.}~\bibnamefont
  {Senthilvelan}},\ }\bibfield  {title} {\bibinfo {title} {On the
  non-$k$-separability of {D}icke class of states and $n$-qudit $w$ states},\
  }\href@noop {} {\bibfield  {journal} {\bibinfo  {journal} {Int. J. Theor.
  Phys.}\ }\textbf {\bibinfo {volume} {55}},\ \bibinfo {pages} {1854} (\bibinfo
  {year} {2016})}\BibitemShut {NoStop}%
\bibitem [{\citenamefont {Chen}\ \emph {et~al.}(2018)\citenamefont {Chen},
  \citenamefont {Jiang},\ and\ \citenamefont {Xu}}]{Chen_2018_necessary}%
  \BibitemOpen
  \bibfield  {author} {\bibinfo {author} {\bibfnamefont {X.-Y.}\ \bibnamefont
  {Chen}}, \bibinfo {author} {\bibfnamefont {L.-Z.}\ \bibnamefont {Jiang}},\
  and\ \bibinfo {author} {\bibfnamefont {Z.-A.}\ \bibnamefont {Xu}},\
  }\bibfield  {title} {\bibinfo {title} {Necessary and sufficient criterion for
  $k$-separability of $n$-qubit noisy {GHZ} states},\ }\href
  {https://doi.org/10.1142/s0219749918500375} {\bibfield  {journal} {\bibinfo
  {journal} {Int. J. Quantum Inf.}\ }\textbf {\bibinfo {volume} {16}},\
  \bibinfo {pages} {1850037} (\bibinfo {year} {2018})}\BibitemShut {NoStop}%
\bibitem [{\citenamefont {Ge}\ \emph {et~al.}(2021)\citenamefont {Ge},
  \citenamefont {Xu}, \citenamefont {Hu}, \citenamefont {Jiang},\ and\
  \citenamefont {Chen}}]{Ge_2021_tripartite}%
  \BibitemOpen
  \bibfield  {author} {\bibinfo {author} {\bibfnamefont {L.-L.}\ \bibnamefont
  {Ge}}, \bibinfo {author} {\bibfnamefont {M.}~\bibnamefont {Xu}}, \bibinfo
  {author} {\bibfnamefont {T.}~\bibnamefont {Hu}}, \bibinfo {author}
  {\bibfnamefont {L.-Z.}\ \bibnamefont {Jiang}},\ and\ \bibinfo {author}
  {\bibfnamefont {X.-Y.}\ \bibnamefont {Chen}},\ }\bibfield  {title} {\bibinfo
  {title} {Tripartite separability of four-qubit {W} and {D}icke mixed state in
  noise environment},\ }\href@noop {} {\bibfield  {journal} {\bibinfo
  {journal} {Eur. Phys. J. Plus}\ }\textbf {\bibinfo {volume} {136}},\ \bibinfo
  {pages} {1} (\bibinfo {year} {2021})}\BibitemShut {NoStop}%
\bibitem [{\citenamefont {Pittenger}\ and\ \citenamefont {Rubin}(2001)}]{PR01}%
  \BibitemOpen
  \bibfield  {author} {\bibinfo {author} {\bibfnamefont {A.~O.}\ \bibnamefont
  {Pittenger}}\ and\ \bibinfo {author} {\bibfnamefont {M.~H.}\ \bibnamefont
  {Rubin}},\ }\bibfield  {title} {\bibinfo {title} {Convexity and the
  separability problem of quantum mechanical density matrices},\ }\href
  {https://api.semanticscholar.org/CorpusID:15260929} {\bibfield  {journal}
  {\bibinfo  {journal} {Linear Algebra Appl.}\ }\textbf {\bibinfo {volume}
  {346}},\ \bibinfo {pages} {47} (\bibinfo {year} {2001})}\BibitemShut
  {NoStop}%
\bibitem [{\citenamefont {Bertlmann}\ \emph {et~al.}(2002)\citenamefont
  {Bertlmann}, \citenamefont {Narnhofer},\ and\ \citenamefont
  {Thirring}}]{BNT02}%
  \BibitemOpen
  \bibfield  {author} {\bibinfo {author} {\bibfnamefont {R.~A.}\ \bibnamefont
  {Bertlmann}}, \bibinfo {author} {\bibfnamefont {H.}~\bibnamefont
  {Narnhofer}},\ and\ \bibinfo {author} {\bibfnamefont {W.}~\bibnamefont
  {Thirring}},\ }\bibfield  {title} {\bibinfo {title} {Geometric picture of
  entanglement and {B}ell inequalities},\ }\href
  {https://doi.org/10.1103/PhysRevA.66.032319} {\bibfield  {journal} {\bibinfo
  {journal} {Phys. Rev. A}\ }\textbf {\bibinfo {volume} {66}},\ \bibinfo
  {pages} {032319} (\bibinfo {year} {2002})}\BibitemShut {NoStop}%
\bibitem [{\citenamefont {Bertlmann}\ \emph {et~al.}(2005)\citenamefont
  {Bertlmann}, \citenamefont {Durstberger}, \citenamefont {Hiesmayr},\ and\
  \citenamefont {Krammer}}]{BDHK05}%
  \BibitemOpen
  \bibfield  {author} {\bibinfo {author} {\bibfnamefont {R.~A.}\ \bibnamefont
  {Bertlmann}}, \bibinfo {author} {\bibfnamefont {K.}~\bibnamefont
  {Durstberger}}, \bibinfo {author} {\bibfnamefont {B.~C.}\ \bibnamefont
  {Hiesmayr}},\ and\ \bibinfo {author} {\bibfnamefont {P.}~\bibnamefont
  {Krammer}},\ }\bibfield  {title} {\bibinfo {title} {Optimal entanglement
  witnesses for qubits and qutrits},\ }\href
  {https://doi.org/10.1103/PhysRevA.72.052331} {\bibfield  {journal} {\bibinfo
  {journal} {Phys. Rev. A}\ }\textbf {\bibinfo {volume} {72}},\ \bibinfo
  {pages} {052331} (\bibinfo {year} {2005})}\BibitemShut {NoStop}%
\bibitem [{\citenamefont {Pisier}(1989)}]{Pis89}%
  \BibitemOpen
  \bibfield  {author} {\bibinfo {author} {\bibfnamefont {G.}~\bibnamefont
  {Pisier}},\ }\href {https://doi.org/10.1017/CBO9780511662454} {\emph
  {\bibinfo {title} {The Volume of Convex Bodies and Banach Space Geometry}}},\
  Cambridge Tracts in Mathematics\ (\bibinfo  {publisher} {Cambridge University
  Press},\ \bibinfo {year} {1989})\BibitemShut {NoStop}%
\bibitem [{\citenamefont {Aubrun}\ and\ \citenamefont {Szarek}(2017)}]{AS17}%
  \BibitemOpen
  \bibfield  {author} {\bibinfo {author} {\bibfnamefont {G.}~\bibnamefont
  {Aubrun}}\ and\ \bibinfo {author} {\bibfnamefont {S.~J.}\ \bibnamefont
  {Szarek}},\ }\href {https://doi.org/10.1090/surv/223} {\emph {\bibinfo
  {title} {Alice and Bob meet Banach : the interface of asymptotic geometric
  analysis and quantum information theory}}},\ Mathematical Surveys and
  Monographs, Volume 223\ (\bibinfo  {publisher} {American Mathematical
  Society},\ \bibinfo {year} {2017})\BibitemShut {NoStop}%
\bibitem [{\citenamefont {Edelsbrunner}\ and\ \citenamefont
  {Grayson}(2000)}]{EG00}%
  \BibitemOpen
  \bibfield  {author} {\bibinfo {author} {\bibfnamefont {H.}~\bibnamefont
  {Edelsbrunner}}\ and\ \bibinfo {author} {\bibfnamefont {D.~R.}\ \bibnamefont
  {Grayson}},\ }\bibfield  {title} {\bibinfo {title} {Edgewise subdivision of a
  simplex},\ }\href {https://doi.org/10.1007/s004540010063} {\bibfield
  {journal} {\bibinfo  {journal} {Discrete Comput. Geom.}\ }\textbf {\bibinfo
  {volume} {24}},\ \bibinfo {pages} {707} (\bibinfo {year} {2000})}\BibitemShut
  {NoStop}%
\bibitem [{\citenamefont {Gurvits}\ and\ \citenamefont {Barnum}(2003)}]{GB03}%
  \BibitemOpen
  \bibfield  {author} {\bibinfo {author} {\bibfnamefont {L.}~\bibnamefont
  {Gurvits}}\ and\ \bibinfo {author} {\bibfnamefont {H.}~\bibnamefont
  {Barnum}},\ }\bibfield  {title} {\bibinfo {title} {Separable balls around the
  maximally mixed multipartite quantum states},\ }\href
  {https://doi.org/10.1103/PhysRevA.68.042312} {\bibfield  {journal} {\bibinfo
  {journal} {Phys. Rev. A}\ }\textbf {\bibinfo {volume} {68}},\ \bibinfo
  {pages} {042312} (\bibinfo {year} {2003})}\BibitemShut {NoStop}%
\bibitem [{\citenamefont {Chen}\ \emph {et~al.}(2012)\citenamefont {Chen},
  \citenamefont {Ma}, \citenamefont {G\"uhne},\ and\ \citenamefont
  {Severini}}]{Chen_2012_estimating}%
  \BibitemOpen
  \bibfield  {author} {\bibinfo {author} {\bibfnamefont {Z.-H.}\ \bibnamefont
  {Chen}}, \bibinfo {author} {\bibfnamefont {Z.-H.}\ \bibnamefont {Ma}},
  \bibinfo {author} {\bibfnamefont {O.}~\bibnamefont {G\"uhne}},\ and\ \bibinfo
  {author} {\bibfnamefont {S.}~\bibnamefont {Severini}},\ }\bibfield  {title}
  {\bibinfo {title} {Estimating entanglement monotones with a generalization of
  the {W}ootters formula},\ }\href
  {https://doi.org/10.1103/PhysRevLett.109.200503} {\bibfield  {journal}
  {\bibinfo  {journal} {Phys. Rev. Lett.}\ }\textbf {\bibinfo {volume} {109}},\
  \bibinfo {pages} {200503} (\bibinfo {year} {2012})}\BibitemShut {NoStop}%
\end{thebibliography}
%apsrev4-2.bst 2019-01-14 (MD) hand-edited version of apsrev4-1.bst
%Control: key (0)
%Control: author (8) initials jnrlst
%Control: editor formatted (1) identically to author
%Control: production of article title (0) allowed
%Control: page (0) single
%Control: year (1) truncated
%Control: production of eprint (0) enabled
%

\onecolumngrid
\newpage
\appendix

%%%%%%%%%%%%%%%%%%%%%%%%%%%%%%%%%%%%%%%%%%%%%%%%%%
%%%%%%%%%%%%%%%%%%%%%%%%%%%%%%%%%%%%%%%%%%%%%%%%%%
\section{Frank-Wolfe method}\label{app:fw}

Conditional Gradient (CG) methods, also known as Frank-Wolfe (FW) methods~\citep{FW56, LP66}, are a class of projection-free first-order algorithms for constrained convex optimization problems of the form
$$
    \min_{x \in \mathcal{X}} f(x)\,,
$$
where $f$ is a smooth convex function and $\mathcal{X}$ is a compact convex set.
Instead of performing projections onto $\mathcal{X}$, which can be computationally expensive, CG methods access the feasible region via a Linear Minimization Oracle (LMO).
At each iteration, the algorithm solves a linear subproblem to compute the so-called FW vertex $$v_t = \argmin_{v \in \mathcal{X}} \langle \nabla f(x_t), v \rangle\,.$$
The next iterate is then a convex combination of $v_t$ and $x_t$, which ensures feasibility.
While FW methods are attractive for their simplicity and projection-free structure, their vanilla form requires an LMO call at every iteration and achieves only a sublinear convergence rate of $\mathcal{O}(1/t)$ in the general convex case.

To improve convergence, active-set variants of CG methods maintain an explicit convex combination of the current iterate over a set of previously used vertices, the so-called active set $\mathcal{A}_t$.
Notable examples include Away-step Frank-Wolfe~\citep{guelat1986some}, Fully-Corrective Frank-Wolfe~\citep{holloway1974extension}, and Blended Pairwise Conditional Gradients (BPCG)~\citep{tsuji2022pairwise}.
Additionally to the standard FW step, these methods perform local updates changing the weights of the vertices.
This enables dropping previously selected vertices from the active set, leading to sparser solutions and faster convergence.
In the case of polytopes and strongly convex objectives, these methods can achieve linear convergence.

BPCG is a particularly efficient instance of the Frank-Wolfe method.
In each iteration, it computes the local FW vertex $s_t$ and the away vertex $a_t$,
\begin{align*}
    s_t &= \argmin_{v \in \mathcal{A}_t} \langle \nabla f(x_t), v\rangle\,,\\
    a_t &= \argmax_{v \in \mathcal{A}_t} \langle \nabla f(x_t), v\rangle\,.
\end{align*}
Unlike the global FW vertex, computing these vertices does not require a call to the LMO but can be done efficiently by enumerating the active set.

BPCG then compares the standard FW direction $d_t^{FW} = v_t - x_t$ with the local pairwise direction $d_t^{LPW} = s_t - a_t$ based on their respective gap value.
The Frank-Wolfe gap $g_t^{FW}=\langle \nabla f(x_t),x_t-v_t \rangle$ and the local pairwise gap $g_t^{LPW} = \langle \nabla f(x_t),a_t-s_t\rangle$ measure the alignment of each direction with the negative gradient and thus their potential to decrease the objective.
BPCG proceeds in the direction corresponding to the larger gap.

Furthermore, BPCG uses a lazification technique inspired by~\citet{pok17lazy} storing an estimate of the Frank-Wolfe gap $\Phi_t$.
By this, one avoids LMO calls in iterations where the local update is used.
This is especially useful in the context of complicated feasible regions, where the LMO is expensive, which is precisely our case in this work.

To accelerate our method even further, we equip BPCG with quadratic correction steps (QC) recently proposed in~\citet{halbey2025efficientquadraticcorrectionsfrankwolfe}.
In cooperation with our group they conducted preliminary experiments on our problem and included the results in their work.
The QC step is tailored for quadratic objectives, $f(x) = \frac12 x^T Ax + b^T x$, like the considered projection problem.
While the local pairwise step updates only two weights in each iteration, QC aims to optimize the weights of the whole active set in a single step.
Minimizing a given objective over the convex hull of the current active set is in general as hard as the original problem.
However, minimizing a quadratic objective over the affine hull of given vertices can be done by solving a linear system,
$$
    \sum_{u \in \mathcal{A}_t} \lambda[u] \langle  A u, v-w \rangle = 0 \quad \forall v \in \mathcal{A}_t \setminus \{w\}, \quad \sum_{u \in \mathcal{A}_t} \lambda[u] = 1\,,
$$
where $\lambda[u]$ denotes the barycentric coordinate of the vertex $u$ and $w$ is a fixed vertex in $\mathcal{A}_t$.
The affine minimizer only lies in the convex hull of the active set (and is therefore feasible) if and only if the barycentric coordinates $\lambda[u]$ are all non-negative.

To handle this issue we follow the QC-MNP approach inspired by the Minimum Norm Point (MNP) algorithm by~\citet{wolfe1976normpoint}.
One considers the line segment between the current iterate and the affine minimizer.
Using a simple ration test, one can find the point on the line segment and in the convex hull that is closest to the affine minimizer.
In practice, we perform a QC-step every time a new vertex is added to the active set, otherwise we continue with cheap pairwise updates in BPCG.

The framework of the algorithm we actually used for entanglement characterization is as follows:
\begin{algorithm}[H]
    \renewcommand\thealgorithm{}
    \floatname{algorithm}{}
    \caption{Lazified BPCG with QC-MNP}
    \label{alg:correctiveFW}
    \begin{algorithmic}[1]
        \Require Distance function $f(x)$; initial point $\x_0 \in \mathcal{S}_k$
        \State $\Phi_0 \gets \max_{v \in \mathcal{X}} \langle \nabla f(x_0), x_0 - v\rangle/2$
        \State $\mathcal{A}_0 \gets \{x_0\}$
        \State $b=$ false
        \For{$t = 0, \ldots, T$}
            \State $a_t \gets \argmax_{v\in \mathcal{A}_t} \langle \nabla f(x_t), v\rangle$ \Comment{away vertex}
            \State $s_t \gets \argmin_{v\in \mathcal{A}_t} \langle \nabla f(x_t), v\rangle$ \Comment{local FW vertex}
            \If{$\langle \nabla f(x_t), a_t - s_t\rangle \geq \Phi_t$}
                \If{$b$}
                    \State $x_{t+1}, \mathcal{A}_{t+1} \gets \text{QC-MNP}(\mathcal{A}_t)$
                    \State $\Phi_{t+1} \gets \Phi_t$
                    \State $b \gets \text{false}$
                \Else \Comment{local pairwise update}
                    \State $d_t \gets a_t - s_t$
                    \State $\gamma_t^* \gets \lambda[a_t]$
                    \State $\gamma_t \gets \argmin_{\gamma \in [0, \gamma_t^*]} f(x_t - \gamma d_t)$
                    \State $x_{t+1} \gets x_t - \gamma_td_t$
                    \State $\Phi_{t+1} \gets \Phi_t$
                    \If{$\gamma_t < \gamma_t^*$}
                        \State $\mathcal{A}_{t+1} \gets \mathcal{A}_{t}$ \Comment{descent step}
                    \Else
                        \State $\mathcal{A}_{t+1} \gets \mathcal{A}_{t} \setminus \{a_t\}$ \Comment{drop step}
                    \EndIf
                \EndIf
            \Else
                \State $v_t \gets \argmin_{v\in \mathcal{X}} \langle \nabla f(x_t), v\rangle$ \Comment{global FW vertex}
                \State $b \gets (v_t \in \mathcal{A}_t)$
                \If{$\langle \nabla f(x_t), x_t -v_t\rangle \geq \Phi_t/2$}
                    \State $d_t \gets x_t - v_t$
                    \State $\gamma_t \gets \argmin_{\gamma \in [0, 1]} f(x_t - \gamma d_t)$
                    \State $x_{t+1} \gets x_t - \gamma_td_t$
                    \State $\Phi_{t+1} \gets \Phi_t$
                    \State $\mathcal{A}_{t+1} \gets \mathcal{A}_{t} \cup \{v_t\}$ \Comment{FW step}
                \Else
                    \State $x_{t+1} \gets x_t$
                    \State $\Phi_{t+1} \gets \Phi_t/2$
                    \State $\mathcal{A}_{t+1} \gets \mathcal{A}_{t}$ \Comment{gap step}
                \EndIf
            \EndIf
        \EndFor
        \Ensure $\x_1, \ldots, \x_T \in \mathcal{X}$
    \end{algorithmic}
\end{algorithm}

%%%%%%%%%%%%%%%%%%%%%%%%%%%%%%%%%%%%%%%%%%%%%%%%%%
%%%%%%%%%%%%%%%%%%%%%%%%%%%%%%%%%%%%%%%%%%%%%%%%%%
\section{Examples for detecting different entanglement structures}\label{app:gme}
Here, we recall the definition of multipartite entanglement.
A multipartite pure state is called $k$-separable if it can be written as $\ket{\psi^{k\rm s}} = \ket{\phi_1} \otimes \ket{\phi_2} \otimes \cdots \otimes \ket{\phi_t}$, where each $\ket{\phi_i}$ may itself be entangled within a subset of the total system.
A mixed state is $k$-separable if it can be expressed as a convex combination of $k$-separable pure states: $\sigma^{k\rm s} = \sum_i \lambda_i \ket{\psi^{k\rm s}_i}\bra{\psi^{k\rm s}_i}$, where the coefficients $\lambda_i$ form a probability distribution.
Usually, if $n$-partite quantum states are not $n$-separable, then they are called entangled states.
And non-$2$-separable states are called GME.

In the main text, we focus on the detection of multipartite entanglement of $n$-qubit states against full separability.
In this section, we illustrate that CG methods can naturally handle rich structures of multipartite entanglement.
We consider the white noise robustness problem for $\mathcal{S}_k$ with $k=2$ and $3$ for several common quantum states.
Note, when $k=2$, it's the case of genuine multipartite entanglement (GME), where detection of GME is an important problem in the field.

\begin{table}[htb]
    \centering
    \renewcommand{\arraystretch}{1.1} %
    \setlength{\tabcolsep}{10pt} %
    \begin{tabular}{|>{\raggedright\arraybackslash}p{3.5cm}
                    |l|l|}
        \hline
        \multirow{2}{*}{Quantum states} & \multicolumn{2}{c|}{Entanglement bound $\pm$ gap} \\
        \cline{2-3}
        &   \qquad\qquad$k=2$ (GME)&    \qquad\qquad$k=3$\\
        \hline
        % GHZ(6 Qubits) &
        % 0.0000 + 0.0000 ($\frac{32}{63}\approx 0.5079$~\cite{Guhne_2010_separability}) &
        % 0.0000 + 0.0000 ($\frac{32}{41}\approx 0.7805$~\cite{Chen_2018_necessary}) \\
        GHZ(5 Qubits) &
        0.508 + 0.016 ($\frac{16}{31}\approx 0.516$~\cite{Guhne_2010_separability}) &
        0.760 + 0.006 ($\frac{16}{21}\approx 0.762$~\cite{Chen_2018_necessary}) \\
        GHZ(4 Qubits) &
        0.528 + 0.005 ($\frac{8}{15}\approx 0.533$~\cite{Guhne_2010_separability}) &
        0.797 + 0.005 ($\frac45=0.8$~\cite{Chen_2018_necessary}) \\
        % W(6 Qubits) &
        % 0.0000 + 0.0000 (0.5424~\cite{Gao_2010_detection}, -) &
        % 0.0000 + 0.0000 (-, -) \\
        W(5 Qubits) &
        0.515 + 0.070 (0.478~\cite{Gao_2010_detection}, -) &
        0.739 + 0.013 (-, -) \\
        W(4 Qubits) &
        0.528 + 0.003 (0.526~\cite{Jungnitsch_2011_taming}, 0.545~\cite{SG18}) &
        0.722 + 0.033 ($\frac{216-16\sqrt{6}}{235}\approx 0.752$~\cite{Ge_2021_tripartite}) \\
        % Dicke(6 Qubits, 3 ex.) &
        % 0.0000 + 0.0000 (0.5517~\cite{Ananth_2016_nonkseparability}, -) &
        % 0.0000 + 0.0000 (0.6154, -) \\
        Dicke(5 Qubits, 2 ex.) &
        0.518 + 0.047 (0.516~\cite{Ananth_2016_nonkseparability}, -) &
        0.717 + 0.055 (0.615, -) \\
        Dicke(4 Qubits, 2 ex.) &
        0.540 + 0.003 (0.539~\cite{Jungnitsch_2011_taming}, -) &
        0.769 + 0.032 (0.788, -) \\
        % Cluster($6$ Qubits)&        $0.0000 + 0.0000$&     $0.0000 + 0.0000$\\
        % Cluster($5$ Qubits)&        $0.0000 + 0.0000$&     $0.0000 + 0.0000$\\
        % Cluster($4$ Qubits)&        $0.0000 + 0.0000$&     $0.0000 + 0.0000$\\
        \hline
    \end{tabular}
    \caption{
        Entanglement robustness thresholds for several multipartite quantum states against white noise, with respect to $k$-separability for $k=2$ (GME) and $k=3$.
        Each entry shows the entanglement bound $p_{\rm ent}$ and the gap $\Delta p$ to separable bound, obtained by our CG-based framework.
        Known entanglement and separable thresholds from the literature are given in parentheses for comparison.
        One number in parentheses represents the exact thresholds.
    }
    \label{tab:result}
\end{table}

%%%%%%%%%%%%%%%%%%%%%%%%%%%%%%%%%%%%%%%%%%%%%%%%%%
%%%%%%%%%%%%%%%%%%%%%%%%%%%%%%%%%%%%%%%%%%%%%%%%%%
\section{Construction on an \texorpdfstring{$\varepsilon$}{epsilon}-net of the set of pure separable states}\label{app:net}

With~\cite{AS17} and in particular Lemma 9.2 therein, the construction of an $\varepsilon$-net of the set of pure separable states can be reduced to the construction of an $\varepsilon$-net of a corresponding hypersphere.
In this appendix we then concentrate on this case, giving an explicit construction, in contrast with the arguments used in~\cite{AS17}.

\subsection{Definitions and notations}

Let $H$ be a $d$-dimensional normed space with unit ball $B$ and unit sphere $S$.
Let $K$ be a nonempty centrosymmetric compact convex set in $H$ and $X=\{x_i\}_{i=1}^m$ be a subset of $\ext(K)$ such that $0\in\conv(X)=P$.

\begin{definition}\label{def:shrink}
  The homothetic distance, also called the geometric distance or the shrinking factor, of $X$ with respect to $K$ is the maximum $\eta$ such that $\eta K\subset P$.
\end{definition}

\begin{definition}
  $X$ is an inner $\epsilon$-approximation of $K$ if there exists $\epsilon>0$ such that $\frac{1}{1+\epsilon}K\subset P$.
\end{definition}

\begin{definition}
  $X$ is an $\varepsilon$-net of $\ext(K)$ if there exists $\varepsilon>0$ such that $\ext(K)\subset X+\varepsilon B$.
\end{definition}

In this appendix we focus on the case where $H$ is real and $K=B$ is the unit ball.

\begin{observation}
  An inner $\epsilon$-approximation of $B$ has homothetic distance $\eta\geq\frac{1}{1+\epsilon}$ with respect to $B$.
\end{observation}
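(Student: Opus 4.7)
The observation is essentially a tautological rewriting of the two definitions that precede it, so the plan is simply to unfold them and compare. The main task is to make explicit that no additional geometric argument is needed beyond matching the quantities involved.

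First I would recall the hypothesis: by \cref{def:shrink} applied to the definition of an inner $\epsilon$-approximation, we are told that there exists $\epsilon>0$ such that
\begin{equation}
    \tfrac{1}{1+\epsilon}\,B \subset P,
\end{equation}
where $P=\conv(X)$. Then I would recall the definition of the homothetic distance, namely that $\eta$ is the \emph{maximum} real number satisfying $\eta B\subset P$. Since the scalar $\tfrac{1}{1+\epsilon}$ is a particular member of the set $\{\lambda\geq 0 : \lambda B\subset P\}$, it follows immediately that
\begin{equation}
    \eta=\max\bigl\{\lambda\geq 0 : \lambda B\subset P\bigr\}\geq \tfrac{1}{1+\epsilon},
\end{equation}
which is the claim.

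There is no real obstacle here: the only point that deserves a brief sanity check is that the set $\{\lambda\geq 0:\lambda B\subset P\}$ is nonempty (guaranteed by $0\in P$) and closed (so that the maximum, rather than a supremum, is indeed attained), both of which follow from the standing assumptions that $0\in\conv(X)$ and that $P$ is a compact convex set. With this noted, the observation reduces to comparing two numbers defined by the same containment condition.
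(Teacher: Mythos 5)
Your proposal is correct and matches the paper's intent exactly: the paper states this as an Observation with no proof precisely because, as you show, it follows by unfolding the two definitions and noting that $\tfrac{1}{1+\epsilon}$ witnesses the containment defining the homothetic distance. Your added remark that the maximum is attained (nonemptiness and closedness of the set of admissible scaling factors) is a harmless and reasonable bit of extra care.
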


\begin{observation}
  An $\varepsilon$-net of $S$ has homothetic distance $\eta\geq\sqrt{1-\frac{\varepsilon^2}{2}}$ with respect to $B$.
\end{observation}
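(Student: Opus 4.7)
The plan is to translate the inclusion $\eta B\subset P$ into a support-function inequality and then verify that inequality vertex by vertex. Concretely, $\eta B\subset P=\conv(X)$ is equivalent by convex duality to $h_P(u)\geq\eta$ for every $u\in S$, where $h_P(u)=\sup_{p\in P}\langle u,p\rangle$. Since $P$ is a polytope whose extreme points lie in $X$, the supremum is attained at a vertex, so $h_P(u)=\max_{x\in X}\langle u,x\rangle$, and the observation reduces to producing, for each $u\in S$, some $x\in X$ with inner product at least the required threshold.

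The first concrete step is a nearest-neighbor estimate. Fix $u\in S$; by the $\varepsilon$-net property there is $x\in X$ with $\|u-x\|\leq\varepsilon$. Since $\|u\|=\|x\|=1$, the polarization identity gives $\|u-x\|^2=2-2\langle u,x\rangle$, hence $\langle u,x\rangle\geq 1-\varepsilon^2/2$. Combining with the reduction above yields $h_P(u)\geq 1-\varepsilon^2/2$ uniformly in $u$, and therefore $\eta\geq 1-\varepsilon^2/2$ by duality.

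The main obstacle is then bridging the gap between this clean bound and the square-root form $\sqrt{1-\varepsilon^2/2}$ appearing in the statement, which is strictly larger for $\varepsilon\in(0,\sqrt{2})$. The nearest-vertex argument is already tight on canonical examples (the cross-polytope inscribed in the unit sphere, or an equilateral triangle inscribed in $S^1$ with $\varepsilon=1$), so no genuine improvement can come from keeping only the single closest vertex. My next attempt would be a multi-neighbor approach: collect all $x_1,\ldots,x_m\in X$ lying inside the $\varepsilon$-cap centred at $u$ and analyze $h_P(u)$ through their geometric spread, possibly invoking the centrosymmetry of $B$ to pair each near-neighbor with a near-antipodal partner supplied by the net condition. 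If no such sharpening is available, the most likely resolution is that the stated exponent is a typographical slip for the direct bound $\eta\geq 1-\varepsilon^2/2$, in which case the two paragraphs above already constitute the full proof.
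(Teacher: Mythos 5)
The paper states this Observation without any proof, so there is no in-text argument to compare yours against; I can only assess your proposal on its merits, and on the merits you are essentially right. The reduction of $\eta B\subset P$ to $h_P(u)\geq\eta$ for all $u\in S$, the identification $h_P(u)=\max_{x\in X}\langle u,x\rangle$, and the polarization step $\langle u,x\rangle=1-\tfrac{1}{2}\|u-x\|^2\geq 1-\varepsilon^2/2$ together give a complete and correct proof of $\eta\geq 1-\varepsilon^2/2$. Moreover, your suspicion about the square root is correct, and you have already done the work needed to confirm it: the bound $1-\varepsilon^2/2$ is attained with \emph{equality} by the regular $n$-gon inscribed in $S^1$ (which is an $\varepsilon$-net for $\varepsilon=2\sin(\pi/2n)$ and has $\eta=\cos(\pi/n)=1-\varepsilon^2/2$) and by the cross-polytope (where $\varepsilon^2=2-2/\sqrt{d}$ and $\eta=1/\sqrt{d}=1-\varepsilon^2/2$). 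Since $\sqrt{1-\varepsilon^2/2}>1-\varepsilon^2/2$ for $\varepsilon\in(0,\sqrt{2})$, these examples refute the Observation as printed, and the multi-neighbour refinement you contemplate in your last paragraph cannot succeed; you should not pursue it.

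It is worth recording where the spurious square root likely comes from and why it matters. The quantity $\sqrt{1-\varepsilon^2/2}$ is exactly what Jung's theorem yields for a facet of \emph{diameter} at most $\varepsilon$ with vertices on $S$ (circumradius at most $\varepsilon\sqrt{(d-1)/2d}<\varepsilon/\sqrt{2}$, hence distance at least $\sqrt{1-\varepsilon^2/2}$ from the origin to its affine hull), which is precisely the mechanism of \cref{thm:main}; but an $\varepsilon$-net does not control facet diameters (the triangle's edges have length $\sqrt{3}$ while $\varepsilon=1$), so that route is unavailable here. The error is not cosmetic for the paper: \cref{prop:witness_robust} shifts the witness by $\epsilon=(1-\eta)\|\Lambda\|$, so substituting the overestimate $\eta=\sqrt{1-\varepsilon^2/2}$ for the correct $1-\varepsilon^2/2$ under-corrects and would undermine the claimed rigour of the witness. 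By contrast, the Example in Appendix~C only uses the converse direction---converting a lower bound on $\eta_n$ into a net parameter---and that direction is safe, since $h_P(u)\geq\eta$ already yields a $\sqrt{2(1-\eta)}$-net and $\sqrt{2(1-\eta)}\leq\sqrt{2(1-\eta^2)}$.
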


\subsection{Implicit construction}

In~\cite[Lemma~4.10]{Pis89}, the following result is stated, whose proof we recall for completeness.

\begin{proposition}
  There exists an $\varepsilon$-net of $S$ with cardinality smaller than $(1+\frac{2}{\varepsilon})^d$.
  \label{prop:implicit}
\end{proposition}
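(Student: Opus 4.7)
The plan is to use a standard volumetric packing argument built around a maximal $\varepsilon$-separated subset of $S$. Let $N=\{x_1,\dots,x_m\}\subset S$ be a maximal subset with the property that $\|x_i-x_j\|\geq\varepsilon$ for all $i\neq j$. Such a set exists by Zorn's lemma (or simply by iterative construction, since $S$ is compact). The first step is to observe that $N$ is automatically an $\varepsilon$-net of $S$: if some $y\in S$ were at distance strictly greater than $\varepsilon$ from every $x_i$, then $N\cup\{y\}$ would still be $\varepsilon$-separated, contradicting maximality. So the only remaining task is to bound $|N|=m$.

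The second step is a volume comparison in $H\cong\mathbb{R}^d$. Consider the open balls $x_i+(\varepsilon/2)B$ for $i=1,\dots,m$. Because the centers are pairwise at distance at least $\varepsilon$, these balls are pairwise disjoint. On the other hand, each $x_i$ lies in $S$, so $\|x_i\|=1$, and hence
\begin{equation*}
  x_i+(\varepsilon/2)B\ \subset\ (1+\varepsilon/2)B
\end{equation*}
for every $i$. Taking Lebesgue measure on both sides and using disjointness on the left yields
\begin{equation*}
  m\cdot\vol\bigl((\varepsilon/2)B\bigr)\ \leq\ \vol\bigl((1+\varepsilon/2)B\bigr).
\end{equation*}

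The third step exploits the $d$-homogeneity $\vol(tB)=t^d\vol(B)$, which gives
\begin{equation*}
  m\ \leq\ \frac{(1+\varepsilon/2)^d}{(\varepsilon/2)^d}\ =\ \Bigl(1+\frac{2}{\varepsilon}\Bigr)^d,
\end{equation*}
and since the inequality $m\cdot(\varepsilon/2)^d\vol(B)\leq(1+\varepsilon/2)^d\vol(B)$ is strict whenever the union on the left is properly contained in the ball on the right (which holds generically), one recovers the strict bound stated in the proposition.

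There is no real obstacle in this argument; the only subtlety worth flagging is that the proof is non-constructive, which is precisely why the authors label the proposition ``implicit'' and presumably contrast it in the next subsection with an explicit construction. If one insists on strict inequality rather than ``$\leq$'', a short additional remark noting that the disjoint union of open balls cannot exhaust the slightly larger ball suffices.
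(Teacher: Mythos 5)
Your proof is correct and follows exactly the same route as the paper's: a maximal $\varepsilon$-separated subset of $S$ is automatically an $\varepsilon$-net, and the volume comparison of the disjoint balls $x_i+\frac{\varepsilon}{2}B$ inside $(1+\frac{\varepsilon}{2})B$ gives the bound $m\leq(1+\frac{2}{\varepsilon})^d$. The paper does not even bother with your (reasonable) remark about upgrading $\leq$ to a strict inequality, so nothing is missing.
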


\begin{proof}
  Let $X$ be a maximal subset of $S$ such that ${\|x_i-x_j\|\geq\varepsilon}$ for all $i\neq j$.
  Clearly, by maximality, $X$ is an $\varepsilon$-net of $S$.
  To derive an upper bound on its cardinality $m$, we note that the balls $x_i+\frac{\varepsilon}{2}B$ are disjoint and included into $(1+\frac{\varepsilon}{2})B$.
  Therefore,
  \begin{equation}
    \sum_{i=1}^m\vol\left(x_i+\frac{\varepsilon}{2}B\right)\leq\vol\left[\left(1+\frac{\varepsilon}{2}\right)B\right]\,,
    \quad\text{hence}\quad
    m\left(\frac{\varepsilon}{2}\right)^d\vol(B)\leq\left(1+\frac{\varepsilon}{2}\right)^d\vol(B)\,,
  \end{equation}
  and this implies $m\leq(1+\frac{2}{\varepsilon})^d$ as announced.
\end{proof}

To the best of our knowledge, no better scaling is currently known, but the construction is implicit, so that its usage in practice is somehow limited.
In the next subsection, we present an alternative construction that is fully explicit, and whose scaling is even better.

\subsection{Explicit construction}

We first recall without proof the main result from~\cite{EG00} on the edgewise subdivision of a simplex.

\begin{proposition}
  For every integer $n\geq1$ and every $d$-simplex $\sigma$ there is a subdivision of $\sigma$ into $n^d$ $d$-simplices of equal volume.
  \label{prop:subdivision}
\end{proposition}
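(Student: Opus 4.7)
The plan is to exploit affine invariance in order to reduce the problem to a concrete simplex well adapted to the symmetric group action on a hypercube, and then invoke the Kuhn (or Freudenthal) triangulation. Since ratios of volumes are preserved by affine bijections, it suffices to prove the statement for the chamber
\[
  C^d = \{x \in [0,1]^d : 0 \leq x_1 \leq x_2 \leq \cdots \leq x_d \leq 1\}\,,
\]
which is a $d$-simplex of volume $1/d!$ and a fundamental domain for the action of the symmetric group $S_d$ on $[0,1]^d$ by coordinate permutations.

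Next, I would subdivide $[0,1]^d$ into $n^d$ congruent sub-cubes of side $1/n$, and triangulate each sub-cube by the standard Kuhn triangulation into $d!$ order simplices, one per permutation $\pi \in S_d$, of the form $\{y : y_{\pi(1)} - c \leq \cdots \leq y_{\pi(d)} - c\}$ translated to the sub-cube with minimal corner $c$. This produces a triangulation of $[0,1]^d$ into $n^d \, d!$ small simplices, each of volume $1/(d! n^d)$.

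The final step is to observe that the refined triangulation is $S_d$-equivariant, so the chamber $C^d$ inherits a valid simplicial subdivision consisting exactly of those small simplices contained in it. A volume count then yields $\vol(C^d)/\vol(\text{small simplex}) = (1/d!)/(1/(d! n^d)) = n^d$ small simplices tiling $C^d$, all of equal volume, which after pulling back along the affine identification gives the desired subdivision of $\sigma$.

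The main obstacle, which I expect to be the technical core of the proof, is to confirm that the refined Kuhn triangulation respects the chamber walls $\{x_i = x_{i+1}\}$, in the sense that no small simplex is split by such a wall. This reduces to a case analysis on the multiplicity pattern of the sub-cube index $k \in \{0,\ldots,n-1\}^d$: sub-cubes with strictly increasing $k$ lie in the interior of $C^d$ and contribute all $d!$ Kuhn cells; sub-cubes with some coincident entries lie on a wall and contribute only those Kuhn cells whose defining permutation is compatible with the corresponding chamber orientation. The identity that these contributions sum to $n^d$ then falls out of the volume count above, closing the argument.
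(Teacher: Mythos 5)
Your argument is correct, but note that the paper does not actually prove \cref{prop:subdivision}: it is recalled ``without proof'' as the main result of \cite{EG00}, so you are supplying a proof where the paper only supplies a citation. Your route --- identify $\sigma$ affinely with the order chamber $C^d=\{0\le x_1\le\cdots\le x_d\le1\}$, refine $[0,1]^d$ into $n^d$ subcubes, Kuhn-triangulate each, and retain the cells contained in $C^d$ --- is sound, and the technical point you flag resolves exactly as you expect: for a pair $i<j$, if the subcube indices satisfy $k_i<k_j$ (resp.\ $k_i>k_j$) the entire subcube lies in $\{x_i\le x_j\}$ (resp.\ the complementary halfspace), while if $k_i=k_j$ each Kuhn cell fixes the relative order of $x_i-c_i$ and $x_j-c_j$, so no cell is cut by a chamber wall; the cells contained in $C^d$ therefore tile it, and the volume ratio $(1/d!)\,/\,(1/(d!\,n^d))=n^d$ gives the count directly, with no need for the combinatorial identity $\sum d!/(m_1!\cdots m_r!)=n^d$ over weakly increasing multi-indices (which nevertheless holds and corroborates it). One step worth making explicit is that the union of the per-subcube Kuhn triangulations is face-to-face --- it is the $1/n$-scaling of the standard Freudenthal triangulation of $\mathbb{R}^d$, and adjacent subcubes induce the same triangulation on their shared facet --- so the restriction to $C^d$ is a genuine subdivision rather than merely a covering. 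Conceptually your construction is not really different from the cited one: the edgewise subdivision of \cite{EG00} is this same object up to affine equivalence (compare the remark after the independence lemma invoked in \cref{prop:regular}, where the subsimplices are described by permutations of a fixed set of difference vectors), but your version has the merit of being elementary and self-contained, at the cost of routing through the hypercube rather than working intrinsically in barycentric coordinates as \cite{EG00} do.
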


\begin{theorem}
  Let $X\subset S$ be an inner $\epsilon$-approximation of $B$ and consider a simplicial complex triangulating the facets of $P=\conv(X)$ and such that the circumcenter of each simplex lies inside this simplex.
  For an integer $n\geq2$, construct the set $X_n$ by applying \cref{prop:subdivision} to all of these simplices and projecting the resulting vertices onto $S$.
  Then $X_n$ is an inner $\epsilon_n$-approximation of $B$ for $\epsilon_n$ such that
  \begin{equation}
    (1+\epsilon_n)^2=1+\frac{d-1}{2d}D_n^2(1+\epsilon)^2\,,
  \end{equation}
  where $D_n$ is the maximal diameter of the subsimplices before projection onto $S$.
  \label{thm:main}
\end{theorem}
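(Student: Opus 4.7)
The plan is to lower-bound the homothetic distance $\eta_n$ of $X_n$ with respect to $B$, so that one may set $\eta_n=(1+\epsilon_n)^{-1}$. First, $\eta_n$ equals the minimum over the projected subsimplices $\hat{\tau}$ of the distance $h_\tau$ from the origin to the affine hull $H_\tau$ of $\hat{\tau}$: for any unit direction $w$, the ray $\overline{0w}$ crosses $H_\tau$ (for the subsimplex $\tau$ containing the corresponding in-facet intersection) within $\hat{\tau}\subset P_n$ at distance $h_\tau/(m_\tau\cdot w)\geq h_\tau$, where $m_\tau$ is the unit normal to $H_\tau$. Since every vertex of $\hat{\tau}$ lies on $S$, the foot of the perpendicular from the origin to $H_\tau$ is equidistant from these vertices and hence coincides with the circumcenter of $\hat{\tau}$ in $H_\tau$; Pythagoras then yields $h_\tau^2+R_\tau^2=1$, where $R_\tau$ denotes the Euclidean circumradius of $\hat{\tau}$.

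The crucial geometric estimate is
\begin{equation*}
  \frac{R_\tau}{h_\tau}\leq\frac{r_\tau}{h_0}\,,
\end{equation*}
where $r_\tau$ is the circumradius of the unprojected subsimplex $\tau$ within its facet and $h_0$ is the distance from the origin to that facet. Intuitively, $R_\tau/h_\tau$ is the tangent of the half-aperture of the cone from the origin through the circumscribing $(d-2)$-sphere $S\cap H_\tau$ of $\hat{\tau}$, and this cannot exceed the corresponding aperture for $\tau$'s circumcircle lying in the facet. I would establish the inequality by analysing the central projection $v\mapsto h_\tau v/(m_\tau\cdot v)$ from $F$ to $H_\tau$, which sends each $v_j$ to $\hat{v}_j$, and checking that the symmetric configuration (with the circumcenter of $\tau$ on the line from the origin perpendicular to $F$) realises the equality case.

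Granted the estimate, Jung's theorem in the $(d-1)$-dimensional facet---which applies because by hypothesis the circumcenter of $\tau$ lies inside $\tau$---yields $r_\tau^2\leq\tfrac{d-1}{2d}\diam(\tau)^2\leq\tfrac{d-1}{2d}D_n^2$, while the inner $\epsilon$-approximation property of $X$ gives $h_0\geq(1+\epsilon)^{-1}$. Combining,
\begin{equation*}
  \frac{1}{h_\tau^2}=1+\frac{R_\tau^2}{h_\tau^2}\leq1+\frac{r_\tau^2}{h_0^2}\leq1+\frac{d-1}{2d}D_n^2(1+\epsilon)^2\,,
\end{equation*}
and taking the worst-case $\tau$ delivers $\eta_n^{-2}\leq(1+\epsilon_n)^2$ with the claimed $\epsilon_n$.

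The main obstacle is the projective estimate $R_\tau/h_\tau\leq r_\tau/h_0$. A naive diameter bound (using $\|v_j-v_k\|^2\geq2\|v_j\|\|v_k\|(1-\cos\theta_{jk})\geq h_0^2\|\hat{v}_j-\hat{v}_k\|^2$ to deduce $\diam(\hat{\tau})\leq\diam(\tau)/h_0$, then Jung inside $H_\tau$) only controls $R_\tau$ itself, which would give the strictly weaker $(1+\epsilon_n)^2\leq 1/\bigl(1-\tfrac{d-1}{2d}D_n^2(1+\epsilon)^2\bigr)$. Recovering the sharper ratio requires carefully accounting for the tilt of $H_\tau$ relative to $F$ in tandem with the chord shortening on $S$; this is the genuinely projective content of the argument.
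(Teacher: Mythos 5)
Your architecture coincides with the paper's: reduce the homothetic distance of $X_n$ to the distance from the origin to the affine hull of each projected subsimplex, control the circumradius via Jung's theorem in the $(d-1)$-dimensional facet (using the circumcentre hypothesis so that circumradius equals the minimum-enclosing-ball radius), insert $h_0\geq\frac{1}{1+\epsilon}$, and finish with Pythagoras on the sphere. Your steps 1 and 2 are sound and in fact tidier than the paper's write-up: the ray through $x=\sum_j\lambda_j v_j$ meets $\hat{\tau}$ at $\sum_j\mu_j\hat{v}_j$ with $\mu_j\propto\lambda_j\|v_j\|$, and since all vertices of $\hat{\tau}$ lie on $S$, its circumsphere is exactly $S\cap H_\tau$, centred at the foot of the perpendicular, so $h_\tau^2+R_\tau^2=1$ is exact.

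The gap is precisely where you flag it: the estimate $R_\tau/h_\tau\leq r_\tau/h_0$ is the entire analytic content of the theorem (without it you only reach the weaker bound you mention), and you give a plan for it rather than a proof. The paper's own argument does not confront this inequality head-on: it applies Jung and Pythagoras to the \emph{unprojected} subsimplex $\Delta$, with circumcentre $c$ and vertex $a$, writing $\|a\|^2=\|c\|^2+\|a-c\|^2$ and taking $c/\|a\|$ as the closest point of the projected simplex to the origin, with $\|c\|\geq\frac{1}{1+\epsilon}$ playing the role of your $h_0$. That step is exact only when all vertices of $\Delta$ are equidistant from the origin, i.e., when $c$ is the foot of the perpendicular --- which is exactly your ``symmetric configuration''. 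So the paper implicitly asserts that this configuration is the worst case, which is the same claim you leave unproven, in different clothing. To complete your version you must actually carry out the tilt analysis: for instance, show that in each $2$-plane through the origin and a pair of vertices the subtended angle $\arctan(b/h_0)-\arctan(a/h_0)$ is maximised, for fixed $b-a$, at the symmetric position $a=-b$ (concavity of $\arctan$), and then lift this to the statement that the cone over the circumsphere of $\tau$ is contained in a right circular cone of half-angle $\arctan(r_\tau/h_0)$ whose boundary sphere on $S$ dominates $S\cap H_\tau$. Until that is done, the proof is incomplete at its central step.
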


\begin{proof}
  Let $\Delta$ be one of the $(d-1)$-simplices obtained after subdivision of the facets, but before projection onto $S$.
  Denote $c$ the center of its circumscribed sphere (which by assumption is also the center of its minimum enclosing ball) and $a$ one point on this sphere.
  By Jung's theorem, we have that
  \begin{equation}
    \|a-c\|\leq\frac{d-1}{2d}\diam(\Delta)^2\leq\frac{d-1}{2d}D_n^2\,.
  \end{equation}
  Since $c$ lies on a facet of $P$, we also have $\|c\|\geq\frac{1}{1+\epsilon}$.
  Combining these two inequalities and by the Pythagorean theorem (see \cref{fig:proof}), we obtain
  \begin{equation}
    \left\|\frac{c}{\|a\|}\right\|^2=\frac{\|c\|^2}{\|c\|^2+\|a-c\|^2}\geq\frac{\left(\frac{1}{1+\epsilon}\right)^2}{\left(\frac{1}{1+\epsilon}\right)^2+\frac{d-1}{2d}D_n^2}\,.
  \end{equation}
  Since the closest point to the origin of the projection of $\Delta$ onto $S$ is $\frac{c}{\|a\|}$, this bound is valid for $X_n$.
\end{proof}

\begin{figure}[h]
  \centering
  \includegraphics{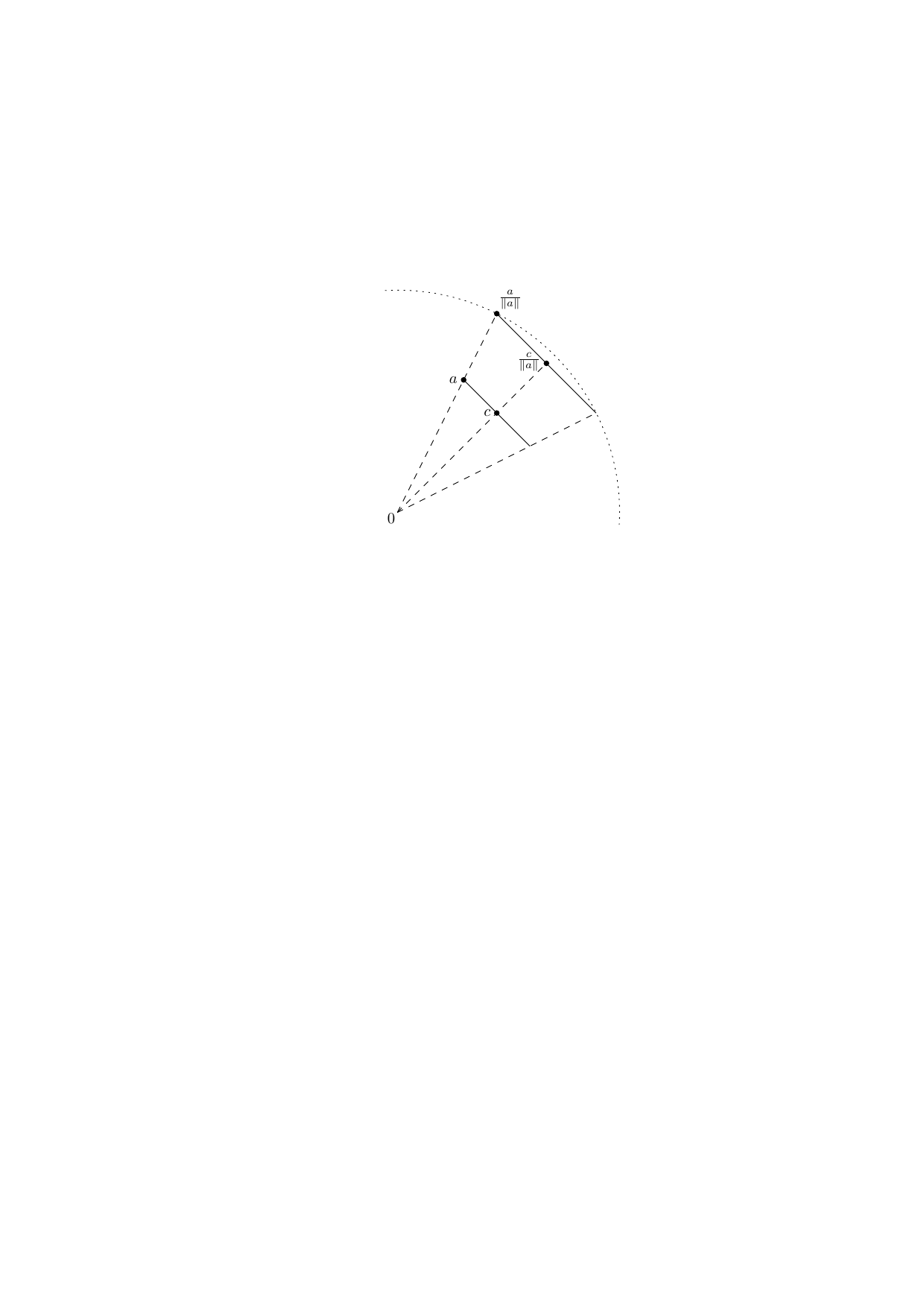}
  \caption{Setup for the proof of \cref{thm:main}.}
  \label{fig:proof}
\end{figure}

\begin{proposition}
  Consider a regular $(d-1)$-simplex with edge length $s$.
  For $d\leq3$ or $n=1$, the diameters of its subsimplices are all $\frac{s}{n}$; in the other cases, they are at most $\frac{s\sqrt{2}}{n}$.
  \label{prop:regular}
\end{proposition}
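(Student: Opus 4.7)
The natural plan is to reduce the proposition to a combinatorial statement about the integer barycentric coordinates of the subsimplex vertices produced by the Edelsbrunner--Grayson construction. Place the regular $(d-1)$-simplex so that its vertices $v_1,\ldots,v_d$ satisfy $\langle v_i,v_j\rangle = -s^2/(2d)$ for $i\neq j$ after centering. A standard computation using $\sum_i v_i = 0$ then gives, for two lattice points $p = \sum_i (k_i/n) v_i$ and $q = \sum_i (k_i'/n) v_i$ with $k_i,k_i'\in\mathbb{Z}_{\geq0}$ and $\sum_i k_i = \sum_i k_i' = n$, the identity
\begin{equation}
\|p - q\|^2 = \frac{s^2}{2n^2} \sum_{i=1}^{d} (k_i - k_i')^2.
\end{equation}
Since each subsimplex of the EG subdivision has its vertices at such lattice points, the whole statement reduces to bounding $\sum_i (k_i - k_i')^2$ over pairs of vertices of a common subsimplex.

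The case $n=1$ is vacuous, and for $d\leq 3$ I would treat the subdivisions directly: for $d-1 = 1$ one splits the edge into $n$ equal pieces of length $s/n$, and for $d-1 = 2$ the EG subdivision of a triangle is the classical one into $n^2$ equilateral subtriangles (upward and downward) each of side $s/n$. In both cases every pair of vertices of a subsimplex has $\sum_i (k_i - k_i')^2 = 2$, yielding diameter exactly $s/n$.

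For $d\geq 4$ and $n\geq 2$ I would invoke the explicit form of the EG subsimplices. Each subsimplex can be described as a path $p_0, p_1, \ldots, p_{d-1}$ of lattice points with successive increments of the form $(e_a - e_b)/n$ governed by a permutation and a lattice base point. The key combinatorial claim, which follows inductively from the EG construction by composing the subdivision of the $(d-1)$-simplex with the subdivision of a one-dimensional slab, is that for any two vertices $p_i, p_j$ of the same subsimplex the coordinate difference $k^{(i)} - k^{(j)}$ lies in $\{-1,0,+1\}^d$ and has at most two $+1$'s and two $-1$'s. Consequently $\sum_i (k_i - k_i')^2 \leq 4$, giving $\|p - q\| \leq s\sqrt{2}/n$. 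The bound is tight: e.g., for $d=4$ and $n=2$ the central subtetrahedron contains the pair $(1,0,0,1)/2$ and $(0,1,1,0)/2$, whose difference vector has squared norm $4$.

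The main obstacle is establishing the combinatorial bound of at most two positive and two negative entries in the coordinate differences within any EG subsimplex for arbitrary $d$. For small $d$ and $n$ one can simply enumerate, but in general one must track the structure through the recursive EG construction, which is where most of the work lies. Once this combinatorial lemma is in place, the distance formula closes the proof immediately.
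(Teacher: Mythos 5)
Your setup is sound and is essentially an algebraic version of the paper's argument: the identity $\|p-q\|^2=\tfrac{s^2}{2n^2}\sum_i(k_i-k_i')^2$ is correct, and it encodes the same fact the paper uses, namely that the difference of two subsimplex vertices telescopes into a sum of $m$ pairwise orthogonal edge vectors of the regular simplex (where $m$ is the number of maximal runs of consecutive indices in the relevant set), giving length $\tfrac{s}{n}\sqrt{m}$. The cases $d\le3$ and $n=1$ are handled correctly. The problem is the ``key combinatorial claim'' that within any EG subsimplex the integer coordinate differences have at most two $+1$'s and two $-1$'s (i.e.\ $m\le2$): you state it, correctly identify it as the crux, and leave it unproved, so the proof is incomplete exactly where all the work lies. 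To be fair, the paper's own proof does no better --- it simply asserts that ``the worst case gives a factor $\sqrt2$'' --- so you have not missed an argument that the paper actually supplies.

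Worse, the claim is false for $d\ge6$ and $n\ge3$, and so, as far as I can tell, is the proposition itself in that range. In the staircase coordinates $\xi_j=\sum_{i\ge j}\lambda_i$, the cells of the edgewise subdivision are the Kuhn simplices of the $\tfrac1n$-grid contained in the cone $1\ge\xi_1\ge\dots\ge\xi_{d-1}\ge0$; two vertices of a cell differ by $\tfrac1n\mathbf{1}_S$ for some $S\subseteq\{1,\dots,d-1\}$, and $m$ above is the number of runs of $S$, which can be as large as $\min\bigl(n,\lceil(d-1)/2\rceil\bigr)$, not $2$. Concretely, for $d=6$, $n=3$, the cell with integer $\xi$-vertices $(2,1,1,0,0)\to(2,2,1,0,0)\to(2,2,1,1,0)\to(3,2,1,1,0)\to(3,2,2,1,0)\to(3,2,2,1,1)$ lies in the cone, and its second and last vertices are the opposite-face centroids $(u_0+u_2+u_4)/3$ and $(u_1+u_3+u_5)/3$, with integer barycentric difference $(-1,1,-1,1,-1,1)$ --- three $+1$'s and three $-1$'s --- at distance $s/\sqrt3=s\sqrt3/n>s\sqrt2/n$. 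So no induction can establish your lemma; the correct general bound is $\tfrac{s}{n}\sqrt{\lceil(d-1)/2\rceil}$, which would then have to be propagated through \cref{thm:main} and the cross-polytope example. A minor further slip: for $d=4$, $n=2$ the EG subdivision with vertex order $u_0,u_1,u_2,u_3$ selects the octahedron diagonal joining $(1,0,1,0)/2$ and $(0,1,0,1)/2$, not your pair $(1,0,0,1)/2$, $(0,1,1,0)/2$ (the length, and hence the tightness claim for $d=4,5$, is unaffected).
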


\begin{proof}
  The diameter of a simplex is simply the largest distance between two of its vertices, which immediately gives $D_1=s$.
  From the proof of the independence lemma in~\cite{EG00}, we know that the vectors defining the subsimplices are, up to a factor $n$, the same as those defining the initial simplex, the various shapes coming from a permutation of the order of these vectors.
  For $d\leq3$, this permutation does not affect the edge length of the different shapes.
  For higher dimensions, the worst case gives a factor $\sqrt2$ by the Pythagorean theorem.
\end{proof}

\begin{example}
  Let $X$ be the set of vertices of the $d$-dimensional cross-polytope.
  Observe first that its homothetic distance with respect to the unit ball is $\eta=\frac{1}{\sqrt{d}}$, corresponding to $\epsilon=\sqrt{d}-1$.
  Then choose an integer $n\geq2$ and define $X_n$ according to the procedure  in \cref{thm:main}.
  All $2^d$ facets of $P$ are regular $(d-1)$-simplices with edge length $\sqrt2$.
  Each facet gives rise to $\binom{n+d}{d}$ vertices of $X_n$ (the $(n+1)$-th $(d-1)$-simplex number), so that the total cardinality of $X_n$ is lower than $2^d\binom{n+d}{d}$.
  With \cref{prop:regular,thm:main}, we obtain that the homothetic distance $\eta_n$ of $X_n$ with respect to the unit ball is such that
  \begin{equation}
    \eta_n\geq\frac{n}{\sqrt{n^2+2(d-1)}}\,,
  \end{equation}
  so that $X_n$ is an $\varepsilon$-net of $S$ for
  \begin{equation}
    \varepsilon=\sqrt{\frac{2}{1+\frac{n^2}{2(d-1)}}}\sim\frac{2\sqrt{d-1}}{n}\,.
  \end{equation}
  This gives the following scaling for the cardinality of this family of $\varepsilon$-nets:
  \begin{equation}
    \frac{2}{(d-1)!}\left(\frac{4\sqrt{d-1}}{\varepsilon}\right)^{d-1}\,,
  \end{equation}
  which improves on \cref{prop:implicit} as announced.
\end{example}

%%%%%%%%%%%%%%%%%%%%%%%%%%%%%%%%%%%%%%%%%%%%%%%%%%
%%%%%%%%%%%%%%%%%%%%%%%%%%%%%%%%%%%%%%%%%%%%%%%%%%
\section{Proof of \texorpdfstring{\cref{prop:witness_robust}}{Proposition 3}}\label{App:proof_robust}
\cref{prop:witness_robust} is built on the construction on $\varepsilon$-net of quantum space, which is discussed above.
As the product state $\phi$ is the closest state to the direction $\Lambda$ in $\mathcal{S}_k^\varepsilon$, there is
\begin{equation}
    \tr(\Lambda\ket{\phi_i}\bra{\phi_i}) \leq \tr(\Lambda\ket{\phi}\bra{\phi}) = \beta\,,\quad \forall\phi_i \in {\rm conv}(\mathcal{S}_k^\varepsilon)\,.
\end{equation}
According to \cref{def:shrink}, we have
\begin{equation}
    \eta \mathcal{S}_k \subset {\rm conv}(\mathcal{S}_k^\varepsilon) \subset \mathcal{S}_k\,.
\end{equation}
Thus, for every $\tau \in \mathcal{S}_k$, we have
\begin{equation}
    \begin{aligned}
    \tr(\Lambda\tau)
        &=      \tr(\Lambda\, \eta \tau) + (1-\eta)\tr(\Lambda\tau)   \\
        &\leq   \tr(\Lambda\, \ket{\phi^*}\bra{\phi^*}) + (1-\eta)\tr(\Lambda\tau)   \\
        &\leq   \tr(\Lambda\, \ket{\phi^*}\bra{\phi^*}) - (1-\eta)||\Lambda||\,.
    \end{aligned}
\end{equation}
Note the final inequality applying the Cauchy-Schwarz inequality and $\tr(\tau)=1$.
Thus we obtain the witness $\mathcal{W}$ in \cref{eq:witness_robust} such that
\begin{eqnarray}
    \tr(\mathcal{W}\tau) \geq 0 \,, \quad \forall \tau \in \mathcal{S}_k\,.
\end{eqnarray}

%%%%%%%%%%%%%%%%%%%%%%%%%%%%%%%%%%%%%%%%%%%%%%%%%%
%%%%%%%%%%%%%%%%%%%%%%%%%%%%%%%%%%%%%%%%%%%%%%%%%%
\section{Proof of \texorpdfstring{\cref{prop:sep_ball_Id}}{Proposition 4}}\label{app:sep_ball}
\begin{proof}
In Ref.~\cite{SG18}, a geometric reconstruction approach was developed within the CG-based framework for separability certification.
In short, based on a quantum state $\rho$, we can construct a state $\rho_t = (1-\epsilon)\rho - \epsilon\openone/d$.
For the constructed state $\rho_t$, the CG method can give us the separable state $\sigma_t$ close to $\rho_t$, as a numerical result.
Extrapolating $\rho$ against $\rho_t$ to achieve $\rho_x$ such that the direction of $\rho_t - \sigma_t$ and $\rho_x -\openone/d$ are parallel.
Then if the distance $r_t = \|\rho_t - \sigma_t\|$ is smaller than the radius $a$ of the separate ball~\cite{GB03}, the state $\rho_x$ is separable, therefore the state $\rho$ is separable as well.
An illustration of the geometric reconstruction is shown in \cref{fig:sep_ball}.

%%%%%%%%%%%%%%%%%%%%%%%%%%%%%%%%%%%%%%%%%%%%%%%%%%%%%%%%%%%%%%%%
\begin{figure}[tb]
  \includegraphics[width=0.40\columnwidth]{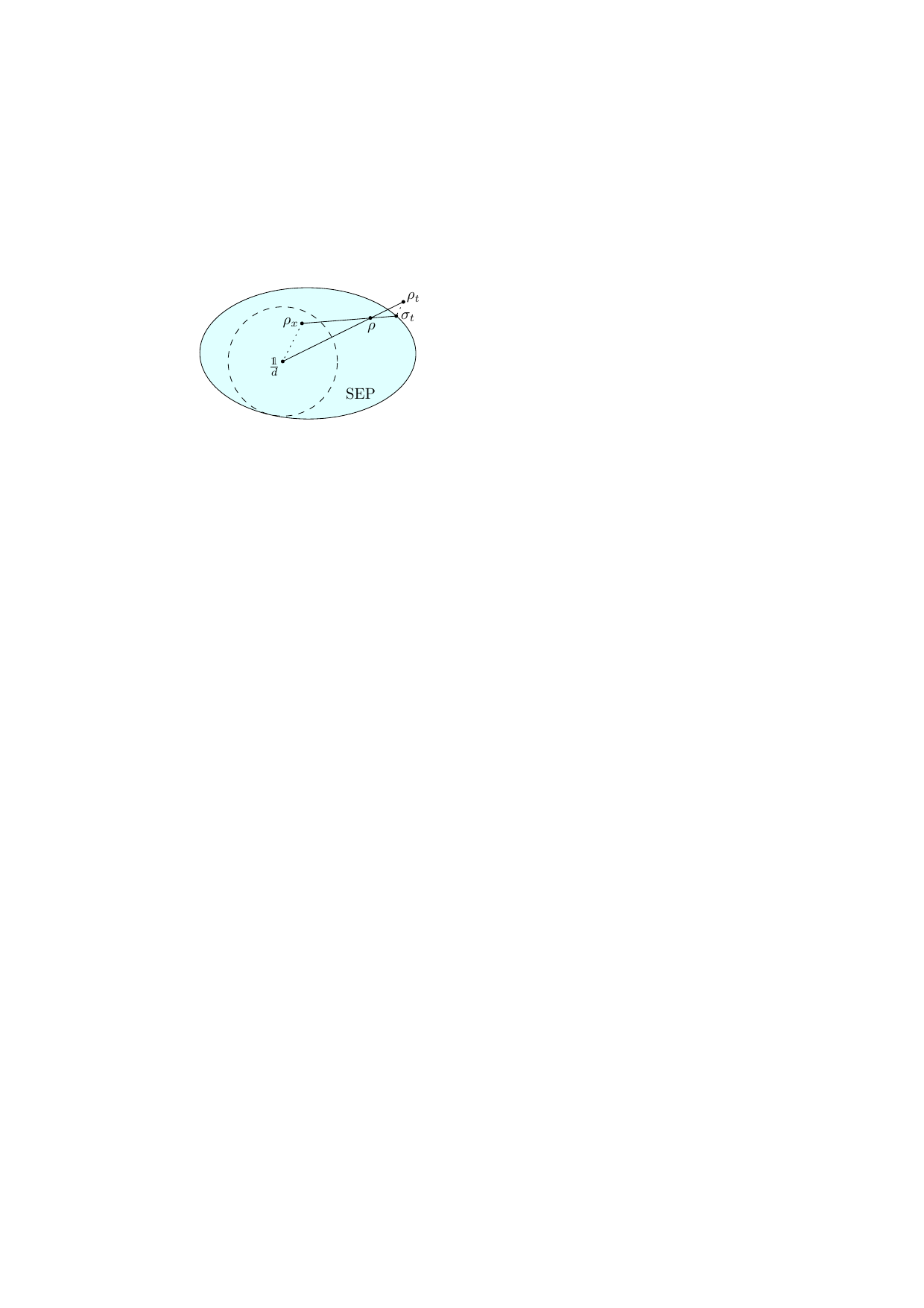}
  \caption{Geometric reconstruction for separability certification within the CG-based framework.}
  \label{fig:sep_ball}
\end{figure}
%%%%%%%%%%%%%%%%%%%%%%%%%%%%%%%%%%%%%%%%%%%%%%%%%%%%%%%%%%%%%%%%

For the problem of estimating the white noise robustness of an entangled state $\ket{\psi}$, let the constructed state be the noisy state we consider, i.e.,
\begin{eqnarray}
    \rho_t = \rho(p) = (1-p)\ketbra{\psi} + p \openone/d^2\,.
\end{eqnarray}
Then as long as
\begin{eqnarray}
    r_t = \frac{1}{\epsilon}\|\rho_t-\sigma_t\| \leq a\,,
\end{eqnarray}
the state
\begin{eqnarray}
    \rho = \frac{1}{1+\epsilon}\rho_t+\frac{\epsilon}{1+\epsilon}\frac{\openone}{d}
    =\left(1-\frac{\epsilon+p}{1+\epsilon}\right)\ketbra{\psi}+\frac{\epsilon+p}{1+\varepsilon}\frac{\openone}{d}
    =\rho\left(\frac{p+\epsilon}{1+\epsilon}\right)
\end{eqnarray}
will be separable.
\end{proof}

%%%%%%%%%%%%%%%%%%%%%%%%%%%%%%%%%%%%%%%%%%%%%%%%%%
%%%%%%%%%%%%%%%%%%%%%%%%%%%%%%%%%%%%%%%%%%%%%%%%%%
\section{Details of applications}
%%%%%%%%%%%%%%%%%%%%%%%%%
%%%%%%%%%%%%%%%%%%%%%%%%%
\subsection{Horodecki state}\label{app:horodecki}
The explicit density matrix of the family of Horodecki states~\cite{Horo97} is
\begin{equation}\label{eq:rhoPH}
\rho^H(a)=\frac1{8a+1}
\left(
    \begin{array}{ccccccccc}
        a     & \cdot & \cdot & \cdot & a     & \cdot & \cdot                  & \cdot & a                      \\
        \cdot & a     & \cdot & \cdot & \cdot & \cdot & \cdot                  & \cdot & \cdot                  \\
        \cdot & \cdot & a     & \cdot & \cdot & \cdot & \cdot                  & \cdot & \cdot                  \\
        \cdot & \cdot & \cdot & a     & \cdot & \cdot & \cdot                  & \cdot & \cdot                  \\
        a     & \cdot & \cdot & \cdot & a     & \cdot & \cdot                  & \cdot & a                      \\
        \cdot & \cdot & \cdot & \cdot & \cdot & a     & \cdot                  & \cdot & \cdot                  \\
        \cdot & \cdot & \cdot & \cdot & \cdot & \cdot & \frac{1+a}{2}          & \cdot & \frac{\sqrt{1-a^2}}{2} \\
        \cdot & \cdot & \cdot & \cdot & \cdot & \cdot & \cdot                  & a     & \cdot                  \\
        a     & \cdot & \cdot & \cdot & a     & \cdot & \frac{\sqrt{1-a^2}}{2} & \cdot & \frac{1+a}{2}
    \end{array}
\right)\!,
\end{equation}
where the dots indicate zeroes.

%%%%%%%%%%%%%%%%%%%%%%%%%
%%%%%%%%%%%%%%%%%%%%%%%%%
\subsection{Noise channels}\label{app:noise}
Here we explicitly give the definitions of the noise channels we consider in the main text, including (a) global depolarizing (white noise), (b) bit flip, (c) phase flip, (d) amplitude damping, and (e) phase damping channels.

For one qubit, the depolarizing channel is also known as white noise, i.e.,
\begin{eqnarray}
    C_{\rm D}(\rho) = (1-p)\rho + p\frac{\openone}{2}\,.
\end{eqnarray}
For a bipartite state, the white noise is the global depolarizing channel.
And under a balance local depolarizing channel, the noisy state will be
\begin{eqnarray}
    C_{\rm GD}(\rho)
    &=& (1-p)^2 \rho
    + p(1-p) \tr_A(\rho)\otimes\frac{\openone}{2} \nonumber\\
    &+& p(1-p) \frac{\openone}{2}\otimes \tr_B(\rho)
    + p^2 \frac{\openone}{2}\otimes\frac{\openone}{2}\,.
\end{eqnarray}
Hence, for a maximally entangled bipartite state, where $\tr_A(\rho)=\tr_B(\rho)=\openone/d$, the global and local depolarizing channels are the same.
We have known the robustness threshold $p = 1/(d+1)$ for the local dimension $d$.

For the bit flip channel and the other channel below, we describe them by Kraus operators, i.e.,
\begin{eqnarray}
    C(\rho) = \sum_i E_i\rho E_i^\dagger\,,
\end{eqnarray}
where Kraus operators need to satisfy $\sum_i E_i E_i^\dagger \leq \openone$.

For the bit flip channel, the Kraus operators are
\begin{eqnarray}
    &&C_{\rm BF}(\rho) = E_0\rho E_0^\dagger + E_1\rho E_1^\dagger\,,\nonumber\\
    &&E_0=\sqrt{1-p}\openone\,,\quad
    E_1=\sqrt{p} X\,.
\end{eqnarray}

The phase flip channel can be described as
\begin{eqnarray}
    &&C_{\rm PF}(\rho) = E_0\rho E_0^\dagger + E_1\rho E_1^\dagger\,,\nonumber\\
    &&E_0=\sqrt{1-p}\openone\,,\quad
    E_1=\sqrt{p} Z\,.
\end{eqnarray}

For the amplitude damping channel, it is the description of energy dissipation, that is, effects due to loss of energy from a quantum system.
For one qubit, it is defined as
\begin{eqnarray}
    &&C_{\rm AD}(\rho) = E_0\rho E_0^\dagger + E_1\rho E_1^\dagger\,,\nonumber\\
    &&E_0=\ketbra{0}+\sqrt{1-\gamma}\ketbra{1}\,,\quad
    E_1=\sqrt{\gamma}\ket{0}\bra{1}\,,
\end{eqnarray}
where $\gamma$ can be thought of as the probability of losing a photon.

The phase damping channel describes the loss of quantum information without loss of energy, which happens when a photon scatters randomly as it travels through a waveguide, or how electronic states in an atom are perturbed upon interacting with distant electrical charges.
The channel for qubit systems is
\begin{eqnarray}
    &&C_{\rm PD}(\rho) = E_0\rho E_0^\dagger + E_1\rho E_1^\dagger\,,\nonumber\\
    &&E_0 = \ketbra{0} + \sqrt{1-\gamma} \ketbra{1}\,,\quad
    E_1 = \sqrt{\gamma}\ketbra{1}\,.
\end{eqnarray}

The numerical results for \cref{fig:channel} are presented below.
\begin{table}[htb]
    \centering
    \renewcommand{\arraystretch}{1.1}
    \setlength{\tabcolsep}{10pt}
    \begin{tabular}{|l|c|c|}
        \hline
        Noise Channel & Noise Strength & Fidelity Threshold \\
        \hline
        White Noise      & 0.6665 + 0.0005 ($\frac23$)   & 50.01\% - 0.04\% (50\%) \\
        Bit Flip         & 0.4999 + 0.0002 ($\frac12$)   & 50.01\% - 0.02\% (50\%) \\
        Phase Flip        & 0.4999 + 0.0002 ($\frac12$)  & 50.01\% - 0.02\% (50\%) \\
        Amplitude Damping & 0.9997 + 0.0003 (1)          & 25.87\% - 0.87\% (25\%) \\
        Phase Damping     & 0.9999 + 0.0001 (1)          & 50.50\% - 0.50\% (50\%) \\
        \hline
    \end{tabular}
    \caption{
    Entanglement robustness thresholds for the Bell state $(\ket{00}+\ket{11})/\sqrt{2}$ under various quantum noise channels: global depolarizing (white noise, GD), bit flip (BF), phase flip (PF), amplitude damping (AD), and phase damping (PD).
    Except for white noise (applied to both qubits), all channels act on one side of the state.
    Error bars represent the certified interval between the entanglement and separability bounds obtained by our numerical algorithm.
    }
    \label{tab:channel}
\end{table}

\end{document}